\DeclareMathOperator*{\argmin}{arg\,min}
\DeclareMathOperator*{\bij}{Bij}
\DeclareMathOperator*{\supp}{supp}
\DeclareMathOperator*{\im}{Im}
\DeclareMathOperator*{\ib}{\text{IB}}
\newcommand{\tq}{\tilde{q}}
\newcommand{\Xcal}{\mathcal{X}}
\newcommand{\Ycal}{\mathcal{Y}}
\newcommand{\Scal}{\mathcal{S}}
\newcommand{\Tcal}{\mathcal{T}}
\newcommand{\hl}[1]{#1}
\newcommand{\hll}[1]{#1}
\newcommand{\hlll}[1]{#1}
\newcommand{\hlc}[1]{#1}
\newcommand{\hlcc}[1]{#1}
\newcommand{\tb}[1]{#1}
\newcommand{\del}{}
\title[Towards IT-Based Discovery of Equivariances]{Towards Information Theory-Based\titlebreak Discovery of Equivariances}
  \author{\Name{Hippolyte Charvin} \Email{h.charvin@herts.ac.uk}\\
  \Name{Nicola {Catenacci Volpi}} \Email{n.catenacci-volpi@herts.ac.uk}\\
  \Name{Daniel Polani} \Email{d.polani@herts.ac.uk}\\
  \addr Adaptive Systems Research Group, University of Hertfordshire}
\begin{document}

\maketitle

\begin{abstract}
    The presence of symmetries imposes a stringent set of constraints on a system. This constrained structure allows intelligent agents interacting with such a system to drastically improve the efficiency of learning and generalization, through the internalisation of the system's symmetries into their information-processing. In parallel, principled models of complexity-constrained learning and behaviour make increasing use of information-theoretic methods. Here, we wish to marry these two perspectives and understand whether and in which form the information-theoretic lens can ``see'' the effect of symmetries of a system. For this purpose, we propose a novel variant of the Information Bottleneck principle  which has  served as a productive basis for many principled studies of learning and information-constrained adaptive behaviour. We show (in the discrete case \hlc{and under a specific technical assumption}) that our approach formalises a certain duality between symmetry and information parsimony: namely, channel equivariances can be characterised by the optimal \emph{mutual information-preserving joint compression} of the channel's input and output. This information-theoretic treatment furthermore suggests a principled notion of ``soft'' equivariance, whose ``coarseness'' is measured by the amount of input-output mutual information preserved by the corresponding optimal compression. This new notion offers a bridge between the field of bounded rationality and the study of symmetries in neural representations. The framework may also allow  (exact and soft) equivariances to be automatically discovered.
\end{abstract}
\begin{keywords}
Channel equivariances, Information Bottleneck, Symmetry Discovery.
\end{keywords}

\bigskip

\bigskip

\section{Introduction}

Our work is motivated by a programme of formalising the relationship between the presence of coherent structures in an environment, and the informational efficiency that these structures make possible for an (artificial or biological) agent that learns and interacts with them. Our intuition is that there is a fundamental duality between structure and information: in short, any structure in a system affords a possibility of informational efficiency to an agent interacting with it, and every improvement in an agent's informational efficiency must exploit some kind of structure in the system it interacts with. 

As a first step towards the operationalisation of this intuition, we focus on a specific kind of structure: symmetries, and, more precisely, the \emph{equivariances} of probabilistic channels \citep{bloem-reddyProbabilisticSymmetriesInvariant2020}.
We seek to first design a formal method to identify the duality between equivariances and information, and will leave the modeling of concrete systems to future work. Previous results \citep{achilleEmergenceInvarianceDisentanglement2018} exhibited links between invariance extraction and the Information Bottleneck (IB) method \citep{tishbyInformationBottleneckMethod2000}, which optimally compresses one variable under the constraint of preserving information about a second variable. \tb{Here, we} adapt this idea to the more general context of equivariances, which increasingly appear crucial to efficient learning and generalisation \citep{higginsSymmetryBasedRepresentationsArtificial2022}. We propose \hlll{an extension} of the IB method whose solutions indeed characterise \hlc{--- under a specific technical assumption ---} the equivariances of discrete probabilistic channels. This characterisation provides, as far as we are aware, a novel and intuitively appealing point of view on equivariances, through the notion of \emph{mutual information-preserving \hll{optimal} joint compression} \hl{of the channel's input and output. Namely, our result characterises equivariances as the pairs of transformations \hll{made indiscernible from the identity} by such a compression.}

However,  to eventually grasp real-world symmetries, which might be much less stringent than mathematical equivariances in the classic, ``exact'' sense, we need to consider  ``soft'' notions of equivariance. The problem then arises of \emph{how to measure the ``divergence''} from being an exact equivariance. Here, we build on our new characterisation of exact equivariances to define the ``coarseness'' of soft equivariances through the resolution of the informationally optimal compression that they make possible. \hl{Namely, soft equivariances of ``granularity'' $\lambda$ are defined as pairs of transformations \hll{made indiscernible from the identity} by an optimal compression which \emph{partially} preserves the channel's input-output mutual information, to a degree specified by $\lambda$.}

This information-theoretic point of view on equivariances links the study of symmetries in biological and artificial agents to the field of bounded rationality \citep{geneweinBoundedRationalityAbstraction2015}, through the duality between informationally optimal representations and the corresponding extracted equivariances.  But crucially, this method might also allow one to \emph{discover} soft equivariances: we will sketch a roadmap towards computing equivariances as defined here.

\paragraph{Assumptions and notations:} We fix \hll{finite sets} $\mathcal{X}$ and $\mathcal{Y}$ and a \emph{fully supported} probability $p(X,Y)$ on $\mathcal{X} \times \mathcal{Y}$.\footnote{For now, we work under the hypothesis that in real-world scenarios, there will typically be at least some noise spillover into all possible configurations. We leave to future work a generalisation to non-fully supported $p(X,Y)$ \hl{(see Remark \ref{rmk:proof_for_general_pYgX} in Appendix \ref{apd:end_proof})} \hll{and to non-finite $p(X,Y)$ (see Appendix \ref{apd:generalisation_to_nonfinite_case})}.} ``Bottlenecks'' are variables $T$ defined on $\mathcal{T} := \mathbb{N}$. The probability simplex defined by a \hll{finite set} $\mathcal{A}$ is denoted by $\Delta_{\mathcal{A}}$. Conditional probabilities, also called channels, will often be regarded as \hl{functions between probability simplices, or as linear maps between vector spaces (e.g., a channel from $\{1, \dots, n\}$ to itself can be regarded as a function from $\Delta_{\{1, \dots, n\}}$ to itself, or as linear map from $\mathbb{R}^n$ to itself)}. The set of channels with input space $\mathcal{A}$ and output space $\mathcal{B}$, resp. output space $\mathcal{A}$ itself, are denoted by $C(\mathcal{A}, \mathcal{B})$, resp. $C(\mathcal{A})$. The set of bijections of $\mathcal{A}$ is $\bij(\mathcal{A})$, and for  $\gamma \in \bij(\mathcal{A})$, $a \in \mathcal{A}$, we write $\gamma \cdot a := \gamma(a)$. The identity map on $\mathcal{A}$ is written $e_{\mathcal{A}}$. The symbol $\circ$ denotes function composition, resp. channel composition, depending on the context (functions are seen as deterministic channels when they are composed with another channel). The symbol $\delta_P$ means $1$ when the proposition $P$ is true, and $0$ otherwise. $D(\cdot||\cdot)$ is the Kullback-Leibler divergence.

\section{The Intertwining Information Bottleneck and exact equivariances}
\label{section:iib_exact_equivariances}


\begin{definition}  \label{def:exact_channel_equivariance}
    An (exact) \emph{equivariance} of the channel $p(Y|X)$ is a pair of deterministic permutations $(\sigma, \tau) \in \bij(\mathcal{X}) \times \bij(\mathcal{Y})$ such that $p(Y|X) \circ \sigma = \tau \circ p(Y|X)$. An \emph{invariance} of $p(Y|X)$ is some $\sigma \in \bij(\mathcal{X})$ such that $p(Y|X) \circ \sigma = p(Y|X)$.
\end{definition}
It can be easily verified that the set of equivariances of $p(Y|X)$ is a group for the relation $(\sigma, \tau) \cdot (\sigma', \tau') := (\sigma \circ \sigma', \tau \circ \tau')$. This group will be called the \emph{equivariance group} of $p(Y|X)$, and be denoted $G_{p(Y|X)}$. Now, in the IB method, which, as mentioned above, has been suggested to extract channel invariances, one considers a pair of variables $X$ and $Y$, but the compressed variable is a function of only one of them, say $X$, whereas it preserves information about the second variable $Y$. This is consistent with the idea that the IB might extract invariances, because the latter transform only the space $\mathcal{X}$. However, equivariances clearly transform \emph{both} spaces $\mathcal{X}$ and $\mathcal{Y}$, so that a compression that has any hope of extracting these equivariances should be a function of \emph{both} $X$ and $Y$. For the same reason, it does not seem natural that, here, the preserved information should be either only that about $X$, or only that about $Y$. Rather, we want to formalise the following intuition: the presence of (exact, resp. soft) equivariances of $p(X,Y)$ should correspond to the possibility of compressing the joint variable $(X,Y)$ in a way that (fully, resp. partially) preserves the \emph{mutual} information $I(X;Y) := D(p(X,Y)||p(X)p(Y))$. Thus we propose to consider what we call the \emph{Intertwining Information Bottleneck} (IIB), defined for every $0 \leq \lambda \leq I(X;Y)$:
\begin{align}   \label{eq:def_intertwining_ib}
  \argmin_{\substack{\kappa \, \in \, C(\mathcal{X} \times \mathcal{Y}, \mathcal{T}) \, : \\ D(\kappa(p(X,Y)) || \kappa(p(X)p(Y))) \, = \, \lambda}} 
  I_\kappa(X,Y;T),
\end{align}
where the mutual information $I_\kappa(X,Y;T)$ is computed from the distribution $p(x,y) \kappa(t|x,y)$. 
The constraint in \eqref{eq:def_intertwining_ib} means that the channel $\kappa$ must conserve the divergence between $p(X,Y)$ and its split version $p(X)p(Y)$, to the level specified by $\lambda$. On the other hand, the minimisation of $I_\kappa(X,Y;T)$ means that $\kappa$ implements, under the latter constraint, an optimal compression. In particular, the solutions to \eqref{eq:def_intertwining_ib} for $\lambda = I(X;Y)$ formalise the intuition of largest possible compression of the pair $(X,Y)$ that still preserves the mutual information between these variables. \tb{Importantly}, \hll{both the IB and the Symmetric IB \citep{slonimMultivariateInformationBottleneck2006} can be recovered from the IIB problem} by adding the right constraint on the shape of $\kappa$ in \eqref{eq:def_intertwining_ib}. \hll{If we add the requirement that $\kappa$ can only compress the $\mathcal{X}$ coordinate, we recover the IB problem with source $X$ and relevancy $Y$; while if we rather impose that $\kappa$ must compress $\mathcal{X}$ and $\mathcal{Y}$ separately, we recover the Symmetric IB problem (see Appendix \ref{apd:iib_generalises_ib_and_sib}).}

Given the structural similarity between \eqref{eq:def_intertwining_ib} and the IB problem, the algorithms for computing the latter might be adaptable to the former. In particular, we leave to future work to \hl{prove the convergence of, and implement, an adapted version of the Blahut-Arimoto algorithm used for the IB \citep{tishbyInformationBottleneckMethod2000}. Another possibility would be to identify, and optimise for, variational bounds \citep{alemiDeepVariationalInformation2019} on the information quantities from \eqref{eq:def_intertwining_ib}. Note that} for  $\lambda = I(X;Y)$, the set of solutions can be computed explicitly, and, up to trivial transformations, it consists of a unique deterministic clustering (see Corollary \ref{cor:solutions_to_IIB_fullsupport} in Appendix \ref{apd:explicit_solution_iib_Ixy}).  Let us now formalise our intuition of duality between the (exact) equivariance group $G_{p(Y|X)}$ and the information compression that the latter makes possible.


\begin{theorem}   \label{th:char_equivariance_with_iib_Ixy}
  \hlc{Assume that $p(X)$ is such that $p(Y) := \sum_x p(Y|x) p(x)$ is uniform}, and let $\kappa \in C(\mathcal{X} \times \mathcal{Y}, \mathcal{T})$ be a solution to the IIB problem for $\lambda = I(X;Y)$. Then a pair $(\sigma,\tau) \in \bij(\mathcal{X}) \times \bij(\mathcal{Y})$ is an equivariance of $p(Y|X)$ if and only if
  \begin{align}   \label{eq:charac_equivariance_matrix_form}
     \kappa \circ (\sigma \otimes \tau) = \kappa.
  \end{align}
\end{theorem}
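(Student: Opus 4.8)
The plan is to combine the explicit description of the IIB solutions at $\lambda = I(X;Y)$ with a short chain of equivalences, whose crucial link is supplied by the uniformity of $p(Y)$.

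\textbf{Step 1 (reduce to the canonical clustering).} First I would invoke Corollary \ref{cor:solutions_to_IIB_fullsupport}: at $\lambda = I(X;Y)$, every solution $\kappa$ is, up to a trivial transformation (relabeling of $\mathcal T$, i.e.\ post-composition with an injection $\mathcal T \to \mathcal T$), the deterministic map $\kappa^{*}$ that sends $(x,y)$ to its class under the relation ``$r(x,y) = r(x',y')$'', where $r(x,y) := \frac{p(x,y)}{p(x)p(y)} = \frac{p(y|x)}{p(y)}$. (This rests on the equality case of the data-processing inequality for $D(\cdot\|\cdot)$: by the log-sum inequality, $D(\kappa(p(X,Y))\,\|\,\kappa(p(X)p(Y))) = I(X;Y)$ forces $r$ to be constant on the fibres of $\kappa$, and minimality of $I_\kappa(X,Y;T) = H(T)$ then forces these fibres to be exactly the level sets of $r$; ruling out strictly better stochastic $\kappa$ is the remaining content of the Corollary.) I would then note that the condition $\kappa \circ (\sigma\otimes\tau) = \kappa$ is insensitive to trivial transformations: if $\kappa = \iota\circ\kappa^{*}$ with $\iota$ injective, then left-cancelling $\iota$ shows $\kappa\circ(\sigma\otimes\tau)=\kappa \iff \kappa^{*}\circ(\sigma\otimes\tau)=\kappa^{*}$. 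So it suffices to prove the theorem for $\kappa = \kappa^{*}$.

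\textbf{Step 2 (translate the matrix identity into invariance of $r$).} Since $\kappa^{*}$ is deterministic, $\kappa^{*}\circ(\sigma\otimes\tau) = \kappa^{*}$ says $\kappa^{*}(\sigma(x),\tau(y)) = \kappa^{*}(x,y)$ for all $(x,y)$; and because $\kappa^{*}$ identifies exactly the pairs with equal $r$-value, this is equivalent to $r(\sigma(x),\tau(y)) = r(x,y)$ for all $(x,y)$, i.e.\ (as $\sigma\otimes\tau$ is a bijection of $\mathcal X\times\mathcal Y$) to $r\circ(\sigma\otimes\tau) = r$.

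\textbf{Step 3 (use uniformity of $p(Y)$ and close the loop with the definition).} Here the hypothesis enters: with $p(Y)$ uniform one has $r(x,y) = |\mathcal Y|\,p(y|x)$, so $r\circ(\sigma\otimes\tau) = r$ is equivalent to $p(\tau(y)\mid\sigma(x)) = p(y\mid x)$ for all $x,y$. Finally I would match this with Definition \ref{def:exact_channel_equivariance} by the direct computation $[p(Y|X)\circ\sigma](y|x) = p(y\mid\sigma(x))$ and $[\tau\circ p(Y|X)](y|x) = p(\tau^{-1}(y)\mid x)$: equating these and substituting $y\mapsto\tau(y)$ (a bijection of $\mathcal Y$) yields precisely $p(\tau(y)\mid\sigma(x)) = p(y\mid x)$ for all $x,y$. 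Concatenating the equivalences of Steps 1--3 proves both implications simultaneously.

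\textbf{Main obstacle.} Granting the Corollary, the only delicate points are (i) confirming that $\kappa\circ(\sigma\otimes\tau) = \kappa$ is invariant under the trivial transformations, so that the canonical representative $\kappa^{*}$ may be used, and (ii) the bookkeeping in Step 3, where uniformity of $p(Y)$ is exactly what makes $r$ proportional to the channel $p(Y|X)$ and hence identifies invariance of $r$ with equivariance --- without it, an equivariance $(\sigma,\tau)$ need not preserve $p(Y)$, hence need not preserve $r$, which is precisely why the theorem carries this technical assumption. If the Corollary is not taken as given, the real work shifts to it: establishing the equality case of the data-processing inequality and excluding stochastic minimizers of $I_\kappa(X,Y;T)$.
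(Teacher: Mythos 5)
Your proof follows the same three-step route as the paper's: reduce to the explicit IIB solutions from Corollary \ref{cor:solutions_to_IIB_fullsupport}, cancel the trivial post-processing, and translate (using the uniformity of $p(Y)$) between invariance of $r(x,y)=p(x,y)/(p(x)p(y))$ and channel equivariance, which matches the paper's Lemma \ref{lemma:equivariance_symbol_wise} and Proposition \ref{prop:equivariance_iff_equivalence}. The only imprecision is in describing the trivial post-processing as an injection $\iota:\mathcal T\to\mathcal T$; the Corollary actually allows a general \emph{congruent} channel $\gamma$ (which may stochastically split symbols, so $\kappa$ need not be deterministic), but the left-cancellation step still goes through for congruent channels, as the paper verifies in the final lemma of Appendix \ref{apd:end_proof}.
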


\begin{proof}
  See Appendix \ref{apd:proof_charac_equivariance_with_iib_solutions}.
\end{proof}

Intuitively, the essentially unique solution $\kappa$ to the IIB for $\lambda=I(X;Y)$ is the deterministic coarse-graining of the product space $\mathcal{X} \times \mathcal{Y}$ satisfying the following property: a pair of permutations $(\sigma,\tau) \in \bij(\mathcal{X}) \times \bij(\mathcal{Y})$ is an equivariance of $p(X,Y)$ if and only if this coarse-graining ``filters out'' the effect of simultaneously transforming $\mathcal{X}$ with $\sigma$ and $\mathcal{Y}$ with $\tau$ \hll{--- thus making the pair $(\sigma,\tau)$ indiscernible from the identity on $\mathcal{X} \times \mathcal{Y}$}. In particular \hlcc{--- under the theorem's assumption of uniform $p(Y)$ ---} the equivariance group of $p(X,Y)$ is characterised by the optimal compression of the joint variable $(X,Y)$ that still preserves the mutual information $I(X;Y)$. \del

The \hlcc{assumption that there exists} an input distribution $p(X)$ such that $\sum_x p(Y|x) p(x)$ is uniform means, geometrically, that the set of output distributions $\{ p(Y|x), \ x \in \mathcal{X} \}$ contains the uniform distribution in its convex hull. Clearly, this assumption is not satisfied for a generic channel $p(Y|X)$. It is however satsified, e.g., if for every output symbol $y \in \mathcal{Y}$, there is an input symbol $x \in \mathcal{X}$ such that the pointwise conditional probability $p(Y|x)$ is close to the Dirac distribution $\delta_y$. This latter condition means, intuitively, that every output symbol is achieved with high probability with a well-chosen input symbol: i.e., that the channel's noise is small. We leave to future work the question of whether the conclusion of Theorem \ref{th:char_equivariance_with_iib_Ixy} can be obtained with more general assumptions.



\section{Towards soft equivariances discovery}
\label{section:IIB_and_soft_equivariances}

To soften the notion of channel equivariance, we first allow the transformations on resp. $\mathcal{X}$ and $\mathcal{Y}$ to be \hlll{non-invertible and} stochastic. But more importantly, we have to choose \emph{the right notion of ``divergence''} from the \hlcc{exact equivariance in Definition \ref{def:exact_channel_equivariance}} being achieved. Following the dual point of view developed in Section \ref{section:iib_exact_equivariances}, we assume, intuitively, that soft equivariances should be characterised by an optimal compression of $(X,Y)$ under the constraint of, here, \emph{partially} preserving $I(X;Y)$. To make the statement precise, let us define, for $\mu \in C(\mathcal{X})$ and $\eta \in C(\mathcal{Y})$, the tensor product $ \mu \otimes \eta (x',y'|x,y) := \mu(x'|x) \eta(y'|y)$. 
\begin{definition}  \label{def:soft_equivariance}
  Let $p(Y|X)$ be given, \hlc{such that there exists some $p(X)$ yielding a uniform $p(Y) := \sum_x p(Y|x) p(x)$. For $p(X,Y)$ defined through the latter $p(X)$ and $p(Y|X)$}, let $\kappa$ be a solution to the \hlc{corresponding} IIB problem \eqref{eq:def_intertwining_ib} with parameter $0 \leq \lambda \leq I(X;Y)$. A \emph{$(\lambda, \kappa)$-equivariance} of $p(X,Y)$ is a pair $(\mu, \eta) \in C(\mathcal{X}) \times C(\mathcal{Y})$ such that
  \begin{align}   \label{eq:def_soft_equivariance}
    \kappa \circ (\mu \otimes \eta) = \kappa .
  \end{align}
  We will also call a pair $(\mu, \eta)$ a \emph{$\lambda$-equivariance} if there exists some solution $\kappa$ to the IIB problem \eqref{eq:def_intertwining_ib}, with parameter $\lambda$, such that $(\mu, \eta)$ is a $(\lambda, \kappa)$-equivariance.
\end{definition}

Intuitively, \tb{a pair $(\mu, \eta)$ is a $(\lambda, \kappa)$-equivariance if the channel $\kappa$, which implements a joint optimal compression of $X$ and $Y$ under the constraint of partially preserving their mutual information, ``filters out''} the simultaneous stochastic transformations of $\mathcal{X}$ through $\mu$ and $\mathcal{Y}$ through $\eta$ \hlll{--- thus making $(\mu,\eta)$ indiscernible from the identity on $\mathcal{X} \times \mathcal{Y}$}. Moreover, it is clear from Theorem \ref{th:char_equivariance_with_iib_Ixy} that \hlc{(under the assumption of this theorem)} exact equivariances are $\lambda$-equivariances with $\lambda=I(X;Y)$.

For fixed $\lambda$ and corresponding $\kappa$, the set of $(\lambda,\kappa)$-equivariances is clearly a \emph{semigroup} \hlcc{with respect to channel composition}. Intuitively, we expect this semigroup to get larger when $\lambda$ decreases: indeed, the IIB channel $\kappa$ then enforces a larger compression of $X$ and $Y$, thus allowing more  transformations $\mu \otimes \eta$ of $\mathcal{X} \times \mathcal{Y}$ to be \hl{``filtered out''} by this compression. More precisely, equation \eqref{eq:def_soft_equivariance} is equivalent to $\im(\mu \otimes \eta - e_{\mathcal{X} \times \mathcal{Y}}) \subseteq \ker(\kappa)$,\footnote{\hl{Here, the discrete conditional probabilities are seen as transition matrices acting on real vectors.}} and we conjecture that the dimension of $\ker(\kappa)$ increases for decreasing $\lambda$, thus allowing it to contain the image of more tranformations of the form $\mu \otimes \eta - e_{\mathcal{X} \times \mathcal{Y}}$. Note for instance that for $\lambda=0$, the IIB solutions are the channels $\kappa$ such that $\kappa(T|x,y)$ does not depend on $(x,y)$. Their kernel is the direction of the whole simplex $\Delta_{\mathcal{X} \times \mathcal{Y}}$, so that the corresponding set of $(0, \kappa)$-equivariances is the whole of $C(\mathcal{X}) \otimes C(\mathcal{Y})$.


Now, assuming that a solution $\kappa$ to the IIB is known, how can we explicitly compute the corresponding $(\lambda, \kappa)$-equivariances? The equation \eqref{eq:def_soft_equivariance} which defines soft equivariances is a polynomial equation, made of quadratic homogeneous polynomials --- more precisely, linear combinations of elements of the form $\mu_{x',x} \eta_{y',y}$. To this homogeneous polynomial equation, we must add the requirement that $\mu$ and $\eta$ are conditional probabilities: i.e., they must satisfy the linear equations $\sum_{x'} \mu_{x',x} = 1$ and $\sum_{y'} \eta_{y',y} = 1$ for all $x \in \mathcal{X}$, $y \in \mathcal{Y}$, along with the linear inequalities defining the non-negativity constraints. Overall, the pair of real matrices $(\mu,\eta)$ that satisfy the conditions of Definition \ref{def:soft_equivariance} thus correspond to the intersection of the positive orthant $\{ \forall x,x' \in \mathcal{X}, \, \forall y, y' \in \mathcal{Y}, \ \mu_{x',x} \geq 0, \: \eta_{y',y} \geq 0\}$ with the solutions of a degree 2 polynomial system of equations. We leave to future work a more involved study of this problem, and of algorithms that might solve it. 

\hll{As a first step for assessing the relevance of our method to equivariance discovery, \hll{one could also} study scenarios where specific exact equivariances are known, and verify that IIB solutions do \hlll{``filter them out'' --- in the sense of equation \eqref{eq:def_soft_equivariance}. If this is the case, one could then perturb the channel $p(Y|X)$, and investigate whether the exact equivariances of the unperturbed channel are still soft equivariances of the perturbed channel --- still in the sense of equation \eqref{eq:def_soft_equivariance}.}}



\smallskip

In short, in this work we have formalised the duality between channel equivariances and the informational efficiency that they make possible for capturing the relationship between the channel's input and output \hlc{--- under a specific technical assumption, see Theorem \ref{th:char_equivariance_with_iib_Ixy}}. We achieved this with a novel extension of the IB principle, which leads to a principled generalisation of \tb{exact} equivariances into ``soft'' ones. The proposed approach might help understand the emergence of symmetries in neural systems through the lens of information parsimony, and potentially opens a new path towards the automatic discovery of exact and soft equivariances. 

\paragraph{Funding}

H.C. and D.P. were funded by the Pazy Foundation under grant ID 195.

\paragraph{Acknowledgements}

\hlll{D.P. thanks} Naftali Tishby for early discussions leading to the present studies.

\bibliography{bibliography.bib}

\appendix

\section{Relation between IB, Symmetric IB and Intertwining IB}
\label{apd:iib_generalises_ib_and_sib}

\hlcc{In this appendix as in other ones, we will omit the subscript ``$q$'' in $I_q(X,Y;T)$, or in similar informational quantities that depend on $q = q(T|X,Y)$, when it does cause any confusion.} \hl{Let us start with the following lemma, which will prove useful below:
\begin{lemma}   \label{lemma:equality_can_be_inequality_iib}
  Let $f$ and $g$ be continuous real functions defined on a convex subspace $C$ of a \hll{topological} vector space, such that $g$ is convex \hll{and non-negative, the image of $g$ contains $0$,} and $g^{-1}(0) \subseteq f^{-1}(0)$. Let $\lambda \geq 0$, and consider the constrained optimisation problem
  \begin{align}   \label{eq:def_general_convex_concave_optimisation_problem}
    \argmin_{\substack{v \, \in \, C \, : \\ f(v) \, \geq \, \lambda}} \, g(v).
  \end{align}
  Then every solution $v$ to \eqref{eq:def_general_convex_concave_optimisation_problem} \hll{(i.e., every minimiser of \eqref{eq:def_general_convex_concave_optimisation_problem}) }must satisfy $f(v)= \lambda$. \hlll{In other words}, the set of solutions to \eqref{eq:def_general_convex_concave_optimisation_problem} coincides with the set of solutions to \hll{
  \begin{align*}
    \argmin_{\substack{v \, \in \, C \, : \\ f(v) \, = \, \lambda}} \, g(v).
  \end{align*}}
\end{lemma}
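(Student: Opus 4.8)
The plan is to argue by contradiction: suppose $v^\ast$ is a minimiser of \eqref{eq:def_general_convex_concave_optimisation_problem} with $f(v^\ast) > \lambda$, and construct a feasible point with strictly smaller value of $g$. The key auxiliary object is a global minimiser of $g$: since $g$ is non-negative and its image contains $0$, there is some $w \in C$ with $g(w) = 0$, and then the inclusion $g^{-1}(0) \subseteq f^{-1}(0)$ forces $f(w) = 0 \le \lambda$.

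Next I would move along the segment from $v^\ast$ to $w$. Set $\gamma(t) := (1-t)v^\ast + t w$ for $t \in [0,1]$; this lies in $C$ by convexity of $C$. The map $t \mapsto f(\gamma(t))$ is continuous since $f$ is, with $f(\gamma(0)) = f(v^\ast) > \lambda$ and $f(\gamma(1)) = f(w) \le \lambda$, so by the intermediate value theorem there is some $t_0 \in (0,1]$ with $f(\gamma(t_0)) = \lambda$; in particular $t_0 \ne 0$, because $f(\gamma(0)) \ne \lambda$. Hence $\gamma(t_0)$ is feasible for \eqref{eq:def_general_convex_concave_optimisation_problem}, and convexity of $g$ gives $g(\gamma(t_0)) \le (1-t_0)g(v^\ast) + t_0 g(w) = (1-t_0)g(v^\ast)$.

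It then remains to turn this into a strict decrease. First, $g(v^\ast) > 0$: otherwise $v^\ast \in g^{-1}(0) \subseteq f^{-1}(0)$, so $f(v^\ast) = 0 \le \lambda$, contradicting $f(v^\ast) > \lambda$. If $t_0 < 1$, then $g(\gamma(t_0)) \le (1-t_0)g(v^\ast) < g(v^\ast)$, contradicting minimality of $v^\ast$; if $t_0 = 1$, then $g(\gamma(t_0)) = g(w) = 0 < g(v^\ast)$, again a contradiction. Thus every solution $v^\ast$ satisfies $f(v^\ast) = \lambda$ (feasibility already gives $f(v^\ast) \ge \lambda$). The final reformulation follows formally: any minimiser of the $\ge \lambda$ problem has $f = \lambda$, hence is feasible and a fortiori optimal for the $= \lambda$ problem; conversely, since $\{f = \lambda\} \subseteq \{f \ge \lambda\}$ the two problems share the same optimal value, so a minimiser of the $= \lambda$ problem is also one of the $\ge \lambda$ problem.

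The argument is elementary and I do not expect serious difficulty; the one point to handle with care is the boundary behaviour along the segment — ensuring $t_0 > 0$ so that the convexity inequality genuinely produces a strict decrease, and dealing with the degenerate case $t_0 = 1$ (equivalently $\lambda = 0$ with $f(w) = \lambda$), where one simply uses $g(w) = 0 < g(v^\ast)$ directly. No compactness or existence-of-minimisers hypothesis is needed, since the statement only constrains those minimisers that happen to exist.
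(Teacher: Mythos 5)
Your proof is correct and follows essentially the same strategy as the paper's: fix a point $w \in g^{-1}(0)$ (whence $f(w)=0 \le \lambda$), move along the segment from $v^\ast$ toward $w$, and use convexity of $g$ together with $g(v^\ast)>0$ to obtain a strict decrease while staying feasible. The only cosmetic difference is that you locate the exact crossing point with $f(\gamma(t_0))=\lambda$ via the intermediate value theorem (handling $t_0=1$ separately), whereas the paper simply takes $\epsilon$ small enough that $f(v^\epsilon)\ge\lambda$ by continuity.
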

}

\begin{proof}
\hll{If $f$ is bounded from above by $\lambda$,} \hlll{then a solution $v$ to \eqref{eq:def_general_convex_concave_optimisation_problem} must satisfy both $f(v) \geq \lambda$ and $f(v) \leq \lambda$, so that $f(v) = \lambda$ and the proof is done.} Let us thus consider a vector $v \in C$ such that $f(v) > \lambda$, \hlll{and fix also some $v_0 \in g^{-1}(0)$}. By convexity of $g$, for all $0 < \epsilon \leq 1$, we have, with $v^\epsilon := \epsilon v_0 + (1-\epsilon) v \in C$,
\begin{align}   \label{eq:local_g_eps_smaller_g_kappa}
  \begin{split}
    g(v^\epsilon) &\leq \epsilon g(v_0) + (1 - \epsilon) g(v) = (1 - \epsilon) g(v) \\
    &< g(v),
  \end{split}
\end{align}
where the equality comes from $g(v_0) = 0$, and the last inequality uses the fact that, because of the assumption $g^{-1}(0) \subseteq f^{-1}(0)$ and $f(v) > \lambda \geq 0$, we must have $g(v) \neq 0$ \hll{--- i.e., taking into account the non-negativity assumption, $g(v) > 0$}. Moreover for small enough $\epsilon$, by continuity of $f$, \hl{the inequality $f(v) > \lambda$ implies that} $f(v^\epsilon) \geq \lambda$.

Therefore, we proved that whenever $f(v) > \lambda$, there exists some $v^\epsilon \in C$ satisfying \hl{both} $f(v^\epsilon) \geq \lambda$ \hl{and} $g(v^\epsilon) < g(v)$\hll{: i.e., $g(v)$ cannot be a minimum of \eqref{eq:def_general_convex_concave_optimisation_problem}}. In other words, for $v$ to achieve the minimum in \eqref{eq:def_general_convex_concave_optimisation_problem}, the condition $f(v) = \lambda$ is necessary \hl{--- which means that} the inequality in \eqref{eq:def_general_convex_concave_optimisation_problem} can be replaced by an equality.
\end{proof}

\subsection{IIB and classic IB}

\hl{We want to impose, in the IIB problem \eqref{eq:def_intertwining_ib}, an additional restriction on $\kappa$ that reduces the latter problem to the Information Bottleneck (IB) problem with source $X$ and relevancy $Y$, i.e., \citep{gilad-bachrachInformationTheoreticTradeoff2003}
\begin{align} \label{eq:ib_problem_primal}
    \argmin_{\substack{q(T_{\ib}|X) \, \in \, C(\mathcal{X}, \mathcal{T}_{\ib}) \; : \\ I_q(Y;T_{\ib}) \geq \lambda}} \; I_q(X;T_{\ib}),
\end{align}
} where $T_{\ib}$ is defined on $\mathcal{T}_{\ib} := \mathbb{N}$, and $I_q(Y;T_{\ib})$ is computed from the marginal $q(Y,T_{\ib})$ of the extension $q(X,Y,T_{\ib})$ of $p(X,Y)$ defined through the Markov chain condition $T_{\ib}-X-Y$, i.e.,  $q(x,y,t_{\ib}) := p(x,y) q(t_{\ib}|x)$. \hl{Let us define the set 
\begin{align*}
  C_{\text{IB}(X,Y)} := \{ \kappa_{\mathcal{X}} \otimes e_{\mathcal{Y}} \hlll{\, : \ \; } \kappa_{\mathcal{X}} \in  C(\mathcal{X}, \mathcal{T}_{\ib}) \} \ \subset \ C(\mathcal{X} \times \mathcal{Y}, \mathcal{T}_{\ib} \times \mathcal{Y}).
\end{align*}
of channels that can compress the $\mathcal{X}$ coordinate but leave the $\mathcal{Y}$ coordinate unchanged. Note that for such channels, the output $T$, defined on $\mathcal{T}_{\ib} \times \mathcal{Y}$,\footnote{\hl{As we defined $\mathcal{T} := \mathbb{N}$, $\mathcal{T}_{\ib} := \mathbb{N}$ and as there is a bijection between $\mathbb{N} \times \mathcal{Y}$ and $\mathbb{N}$, writing here the bottleneck space as $\mathcal{T}_{\ib} \times \mathcal{Y}$ rather than $\mathcal{T}$ is just a difference of presentation.}} can be written $T = (T_{\ib}, Y')$, where $T_{\ib}$ is defined on $\mathcal{T}_{\ib}$ and $Y'$ is a copy of $Y$ --- i.e., $p(T_{\ib}, Y') = p(T_{\ib}, Y)$ and we use the notation $Y'$ instead of $Y$ just because it makes computations clearer below. We now consider the problem
\begin{align} \label{eq:ib_problem_primal_iib_formulation}
    \argmin_{\substack{\kappa \, \in \, C_{\text{IB}(X,Y)} \; : \\ \ D(\kappa(p(X,Y))||\kappa(p(X)p(Y))) = \lambda}} \; I_\kappa(X,Y;T),
\end{align}
which is the IIB problem \eqref{eq:def_intertwining_ib} where we added the constraint that $\kappa$ must be of the form $\kappa_{\mathcal{X}} \otimes e_{\mathcal{Y}}$. It turns out that \eqref{eq:ib_problem_primal_iib_formulation} does coincide with the IB problem, in the following sense:
\begin{proposition} \label{prop:ib_equivalent_to_iib_with_constraint}
    For every $0 \leq \lambda \leq I(X;Y)$, a channel $\kappa_{\mathcal{X}} \otimes e_{\mathcal{Y}} \in C_{\ib}(X,Y)$ solves the problem \eqref{eq:ib_problem_primal_iib_formulation} if and only if $\kappa_{\mathcal{X}} = \kappa_{\mathcal{X}}(T_{\ib}|X)$ solves the IB problem \eqref{eq:ib_problem_primal}.
\end{proposition}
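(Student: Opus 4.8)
The plan is to show that restricting $\kappa$ to the form $\kappa_{\mathcal{X}} \otimes e_{\mathcal{Y}}$ makes both the objective and the constraint of \eqref{eq:ib_problem_primal_iib_formulation} collapse onto the corresponding IB quantities --- up to an additive constant in the objective and up to trading the equality constraint for an inequality via Lemma \ref{lemma:equality_can_be_inequality_iib}.

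First I would set up the dictionary between the two problems. Given $\kappa = \kappa_{\mathcal{X}} \otimes e_{\mathcal{Y}}$, I write $T = (T_{\ib}, Y')$ with $Y'$ a verbatim copy of $Y$, and observe that the joint law $p(x,y)\kappa(t_{\ib},y'|x,y) = p(x,y)\kappa_{\mathcal{X}}(t_{\ib}|x)\delta_{y'=y}$ has as its $(X,Y,T_{\ib})$-marginal exactly the IB extension $q(x,y,t_{\ib}) = p(x,y)\kappa_{\mathcal{X}}(t_{\ib}|x)$; in particular the Markov chain $T_{\ib}-X-Y$ holds, so $H(T_{\ib}|X,Y) = H(T_{\ib}|X)$. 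This gives a bijection $\kappa \leftrightarrow \kappa_{\mathcal{X}}$ between the channels appearing in \eqref{eq:ib_problem_primal_iib_formulation} and those appearing in \eqref{eq:ib_problem_primal}.

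Next I would rewrite the objective. Using the chain rule for mutual information on the pair $(Y',T_{\ib})$, the fact that $Y'$ is a deterministic function of $(X,Y)$, and the Markov property, one obtains
\begin{align*}
  I_\kappa(X,Y;T) &= I(X,Y;Y') + I(X,Y;T_{\ib}\,|\,Y') \\
  &= H(Y) + H(T_{\ib}|Y) - H(T_{\ib}|X) \\
  &= H(Y) + I_q(X;T_{\ib}) - I_q(Y;T_{\ib}).
\end{align*}
Then I would rewrite the constraint: pushing $p(X,Y)$ and $p(X)p(Y)$ through $\kappa$ yields respectively the joint $q(T_{\ib},Y')$ and the product $q(T_{\ib})\,p(Y')$ of the same extension, so that
\begin{align*}
  D\bigl(\kappa(p(X,Y))\,\|\,\kappa(p(X)p(Y))\bigr) = I_q(Y';T_{\ib}) = I_q(Y;T_{\ib}).
\end{align*}
Hence on the feasible set of \eqref{eq:ib_problem_primal_iib_formulation} we have $I_q(Y;T_{\ib}) = \lambda$, and there the objective equals $I_q(X;T_{\ib}) + \bigl(H(Y) - \lambda\bigr)$, a constant shift of $I_q(X;T_{\ib})$. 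Therefore $\kappa_{\mathcal{X}} \otimes e_{\mathcal{Y}}$ solves \eqref{eq:ib_problem_primal_iib_formulation} if and only if $\kappa_{\mathcal{X}}$ solves $\argmin_{q(T_{\ib}|X)\,:\,I_q(Y;T_{\ib})\,=\,\lambda} I_q(X;T_{\ib})$.

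Finally I would close the gap between this equality-constrained IB and the inequality-constrained form \eqref{eq:ib_problem_primal} by invoking Lemma \ref{lemma:equality_can_be_inequality_iib} with $C = C(\mathcal{X},\mathcal{T}_{\ib})$, $g(q) := I_q(X;T_{\ib})$ and $f(q) := I_q(Y;T_{\ib})$: $g$ is convex in $q$ for fixed $p(X)$ (convexity of mutual information in the channel), non-negative, and attains $0$ (at any $q$ independent of $X$), while $f$ is continuous and $g^{-1}(0) \subseteq f^{-1}(0)$, since $I_q(X;T_{\ib}) = 0$ forces $T_{\ib}$ independent of $X$, hence independent of $Y$ by the Markov chain, hence $I_q(Y;T_{\ib}) = 0$. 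The lemma then identifies the solution sets of the ``$=\lambda$'' and ``$\geq\lambda$'' IB problems, which together with the previous paragraph gives the claimed equivalence. The main obstacle I anticipate is the bookkeeping in the objective rewriting --- being careful that $Y'$ is genuinely a copy of $Y$ (so that $I(X,Y;Y')$ and $H(T_{\ib}|Y')$ reduce to $H(Y)$ and $H(T_{\ib}|Y)$) and that each conditioning respects the Markov structure --- together with checking the hypotheses of Lemma \ref{lemma:equality_can_be_inequality_iib}. One should also note the discussion is non-vacuous precisely for $0 \leq \lambda \leq I(X;Y)$, since $I_q(Y;T_{\ib}) \leq I(X;Y)$ by the data-processing inequality for $T_{\ib}-X-Y$, and every value in $[0, I(X;Y)]$ is attained.
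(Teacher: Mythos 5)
Your proof is correct and follows essentially the same route as the paper's: identify $\kappa(p(X,Y))$ and $\kappa(p(X)p(Y))$ with $q(T_{\ib},Y)$ and $q(T_{\ib})p(Y)$ so the constraint becomes $I_q(Y;T_{\ib})=\lambda$, expand $I_\kappa(X,Y;T)$ by the chain rule into $I_q(X;T_{\ib}) - I_q(Y;T_{\ib}) + H(Y)$, and then invoke Lemma~\ref{lemma:equality_can_be_inequality_iib} to pass from the equality to the inequality constraint. The only difference is cosmetic (you peel off $Y'$ before $T_{\ib}$ in the chain-rule expansion, and you spell out the hypotheses of the lemma --- convexity of $I_q(X;T_{\ib})$ and $g^{-1}(0)\subseteq f^{-1}(0)$ via the Markov chain --- which the paper leaves as ``easily verified'').
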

Crucially, note that here $\kappa_{\mathcal{X}} \otimes e_{\mathcal{Y}} \in C_{\ib}(X,Y)$ is entirely determined by $\kappa_{\mathcal{X}}$ through its tensor product with the fixed identity channel $e_{\mathcal{Y}}$, while conversely, $\kappa_{\mathcal{X}}$ is entirely determined by $\kappa_{\mathcal{X}} \otimes e_{\mathcal{Y}}$ through the marginalisation relation
\begin{align*}
    \kappa_{\mathcal{X}}(t_{\ib}|x) = \sum_{y'} \kappa_{\mathcal{X}}(t_{\ib}|x) p(y') = \sum_{y,y'} \kappa_{\mathcal{X}}(t_{\ib}|x) \delta_{y=y'} p(y) = \sum_{y,y'} \kappa_{\mathcal{X}} \otimes e_{\mathcal{Y}}(t_{\ib},y'|x,y)p(y).
\end{align*}
Informally, the only difference between $\kappa_{\mathcal{X}}$ and $\kappa_{\mathcal{X}} \otimes e_{\mathcal{Y}}$ is that $\kappa_{\mathcal{X}} \otimes e_{\mathcal{Y}}$ concatenates the output of $\kappa_{\mathcal{X}}$ with a copy of $Y$. Let us now prove Proposition \ref{prop:ib_equivalent_to_iib_with_constraint}.}
\begin{proof}
    \hl{For $\kappa = \kappa_{\mathcal{X}} \otimes e_{\mathcal{Y}} \in C_{\ib}(X,Y)$, let us write $q(X,Y,T_{\ib},Y')$ the distribution defined by 
    \begin{align}   \label{eq:local_def_q_XYTibYprime}
        q(x,y,t_{\ib},y') := p(x,y) \kappa(t_{\ib},y'|x,y) = p(x,y) \kappa_{\mathcal{X}}(t_{\ib}|x) \delta_{y'=y}.
    \end{align}
    It can be easily verified that then $\kappa(p(X,Y)) = q(T_{\ib},Y)$ and $\kappa(p(X)p(Y)) = q(T_{\ib})p(Y)$, so that 
    \begin{align}   \label{eq:local_dkl_for_ib}
        \begin{split}
            D(\kappa(p(X,Y))||\kappa(p(X)p(Y))) = D(q(T_{\ib},Y)||q(T_{\ib})p(Y))) = I_q(T_{\ib};Y).
        \end{split}
    \end{align}
    On the other hand,
    \begin{align}
        I_\kappa(X,Y;T) &= I_q(X,Y;T_{\ib},Y') \nonumber \\
        &= I_q(X,Y;T_{\ib}) + I_q(X,Y;Y'|T_{\ib}) \label{eq:local_I_XYT_for_ib_1} \\
        &= I_q(X;T_{\ib}) + I_q(Y;Y'|T_{\ib}) + I_q(X;Y'|T_{\ib},Y) \label{eq:local_I_XYT_for_ib_2} \\
        &= I_q(X;T_{\ib}) + H_q(Y|T_{\ib}) \label{eq:local_I_XYT_for_ib_3} \\
        &=  I_q(X;T_{\ib}) - I_q(Y;T_{\ib}) +H(Y), \label{eq:local_I_XYT_for_ib_4}
    \end{align}
    where \hll{line \eqref{eq:local_I_XYT_for_ib_1} uses the chain rule for mutual information, line \eqref{eq:local_I_XYT_for_ib_2} uses the chain rule again and} the fact that from the definition \eqref{eq:local_def_q_XYTibYprime}, under $q$, the Markov chain $T_{\ib} - X - Y$ holds, while line \eqref{eq:local_I_XYT_for_ib_3} uses $I(X;Y'|T_{\ib},Y) = 0$ and $I_q(Y;Y'|T_{\ib}) = H(Y|T_{\ib})$, which are both consequences of $Y'$ being a copy of $Y$. Therefore, combining \eqref{eq:local_dkl_for_ib}, \eqref{eq:local_I_XYT_for_ib_4} and the fact that $H(Y)$ does not depend on $\kappa$, the problem \eqref{eq:ib_problem_primal_iib_formulation} has the same solutions as
    \begin{align} \label{eq:ib_problem_primal_iib_formulation_bis}
        \argmin_{\substack{\kappa \, \in \, C_{\text{IB}(X,Y)} \; : \\ \ I_q(Y;T_{\ib}) = \lambda}} \; \hlcc{\left[ I_q(X;T_{\ib}) - I_q(Y;T_{\ib}) \right]},
    \end{align}
    where $q$ is defined from $\kappa$ through \eqref{eq:local_def_q_XYTibYprime}. But in \eqref{eq:ib_problem_primal_iib_formulation_bis}, as the value of $I_q(Y;T_{\ib})$ is fixed by the constraint, it can be removed from the target function. Moreover, the definition \eqref{eq:local_def_q_XYTibYprime} shows that $\kappa$ is entirely determined by $q(T_{\ib}|X) = \kappa_{\mathcal{X}}$. These two latter facts show that $\kappa$ solves \eqref{eq:ib_problem_primal_iib_formulation_bis} (i.e., solves \eqref{eq:ib_problem_primal_iib_formulation}) if and only if $q(T_{\ib}|X)$ solves 
    \begin{align} \label{eq:local_almostib_problem_primal}
        \argmin_{\substack{q(T_{\ib}|X) \, \in \, C(\mathcal{X}, \mathcal{T}_{\ib}) \; : \\ \ I_q(T_{\ib}; Y) = \lambda}} \; I_q(X;T_{\ib}).
    \end{align}
    Eventually, it can be easily verified that the convex set $C := C(\mathcal{X}, \mathcal{T}_{\ib})$, together with the functions $f(q(T_{\ib}|X)) := I_q(Y;T_{\ib})$ and $g(q(T_{\ib}|X)) := I_q(X;T_{\ib})$, satisfy the assumptions of Lemma \ref{lemma:equality_can_be_inequality_iib}. Thus the equality $I_q(Y;T_{\ib}) = \lambda$ in \eqref{eq:local_almostib_problem_primal} can be replaced by the inequality $I_q(Y;T_{\ib}) \geq \lambda$: in other words, the problem \eqref{eq:local_almostib_problem_primal} can be replaced by the IB problem \eqref{eq:ib_problem_primal}. This ends the proof of the proposition.
    }
\end{proof}

\hll{Let us point out that while the IIB problem is symmetric in $X$ and $Y$, this is not the case for the IB problem, where the source variable and the relevancy variable play different roles. Here, we proved that the IB with source $X$ and relevancy $Y$ can be recovered by adding to \eqref{eq:def_intertwining_ib} the constraint defined by $C_{\ib}(X,Y)$, but similarly, the IB with source $Y$ and relevancy $X$ can be recovered by replacing, in \eqref{eq:ib_problem_primal_iib_formulation}, the set $C_{\ib (X,Y)}$ with the set}
\hl{
\begin{align*}
  C_{\text{IB}(Y,X)} := \{ e_{\mathcal{X}} \otimes \kappa_{\mathcal{Y}} \hlll{\, : \ \; } \kappa_{\mathcal{Y}} \in  C(\mathcal{Y}, \mathcal{T}_{\ib}) \} \ \subset \ C(\mathcal{X} \times \mathcal{Y}, \mathcal{X} \times \mathcal{T}_{\ib}).
\end{align*}
of channels that compress the $\mathcal{Y}$ coordinate but leave the $\mathcal{X}$ coordinate unchanged.}

\subsection{IIB and Symmetric IB}

Let us consider a different restriction on $\kappa$ which will lead to the Symmetric IB \citep{slonimMultivariateInformationBottleneck2006}. \hl{With $\mathcal{T}_\mathcal{X} := \mathbb{N}$ and $\mathcal{T}_\mathcal{Y} := \mathbb{N}$, we define the set
\begin{align*}
  C_{sIB(X,Y)} := \{ \kappa_{\mathcal{X}} \otimes \kappa_{\mathcal{Y}} \hlll{\, : \ \; } \kappa_{\mathcal{X}} \in C(\mathcal{X}, \mathcal{T}_{\mathcal{X}}),  \, \kappa_{\mathcal{Y}} \in C(\mathcal{Y}, \mathcal{T}_\mathcal{Y}) \} \ \subset \ C(\mathcal{X} \times \mathcal{Y}, \mathcal{T}_\mathcal{X} \times \mathcal{T}_\mathcal{Y})
\end{align*}}
of split channels, i.e., of channels that transform $\mathcal{X}$ and $\mathcal{Y}$ separately.\footnote{As we defined $\mathcal{T} := \mathbb{N}$ and as there is a bijection between $\mathbb{N} \times \mathbb{N}$ and $\mathbb{N}$, writing here the \hl{bottleneck} space as $\mathcal{T}_{\mathcal{X}} \times \mathcal{T}_{\mathcal{Y}}$ rather than $\mathcal{T}$ is just a difference of presentation.} \hl{Note that for such channels, the output $T$ can be written $T = (T_X, T_Y)$, where $T_X$ is defined on $\mathcal{T}_{\mathcal{X}}$ and $T_Y$ on $\mathcal{T}_{\mathcal{Y}}$.} We consider the problem
\begin{align} \label{eq:sib_problem_primal_iib_formulation}
  \argmin_{\substack{\kappa \in C_{\text{sIB}(X,Y)} \; : \\ \ D(\kappa(p(X,Y))||\kappa(p(X)p(Y))) = \lambda}} \; I_\kappa(X,Y;T).
\end{align}
We want to show that this problem has the same set of solutions as
\begin{align} \label{eq:sib_problem_primal}
  \argmin_{\substack{q(T_X|X), \, q(T_Y|Y) \; : \\ I_q(T_X;T_Y) \geq \lambda}} \; \hlcc{\left[ I_q(X;T_X) + I_q(Y;T_Y) \right].}
\end{align}

\begin{proposition}
  Let $0 \leq \lambda \leq I(X;Y)$. Then:
  \begin{enumerate}
    \item[$(i)$] In \eqref{eq:sib_problem_primal}, the inequality in the constraint can be replaced by the equality constraint $I_q(T_X;T_Y) = \lambda$.
    \item[$(ii)$] The set of solutions of the problems \eqref{eq:sib_problem_primal_iib_formulation} and \eqref{eq:sib_problem_primal} are identical.
  \end{enumerate}
\end{proposition}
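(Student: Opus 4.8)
The plan is to mirror the proof of Proposition~\ref{prop:ib_equivalent_to_iib_with_constraint}: first establish part $(ii)$ by rewriting the objective and the constraint of \eqref{eq:sib_problem_primal_iib_formulation} in terms of the pair $(q(T_X|X),q(T_Y|Y))$, obtaining an equality-constrained version of \eqref{eq:sib_problem_primal}; then obtain part $(i)$ as a direct application of Lemma~\ref{lemma:equality_can_be_inequality_iib}, which simultaneously upgrades the equality-constrained problem to \eqref{eq:sib_problem_primal} itself.

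For part $(i)$, I would take $C$ to be the convex subset of $C(\mathcal{X},\mathcal{T}_\mathcal{X}) \times C(\mathcal{Y},\mathcal{T}_\mathcal{Y})$ consisting of all pairs $(q(T_X|X),q(T_Y|Y))$, and set $g := I_q(X;T_X) + I_q(Y;T_Y)$ and $f := I_q(T_X;T_Y)$. I then check the hypotheses of Lemma~\ref{lemma:equality_can_be_inequality_iib}: $f$ and $g$ are continuous; $g$ is non-negative; $g$ is convex, because $q(T_X|X) \mapsto I_q(X;T_X)$ is convex for the fixed marginal $p(X)$ (a standard fact), likewise for the $Y$-term, and the sum of functions each depending on a single coordinate of the product is convex on the product; and $g$ vanishes exactly when both $q(T_X|X)$ and $q(T_Y|Y)$ are constant channels, in which case $T_X$ and $T_Y$ are deterministic, hence independent, so $f = 0$, giving $0 \in \operatorname{im}(g)$ and $g^{-1}(0) \subseteq f^{-1}(0)$. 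Lemma~\ref{lemma:equality_can_be_inequality_iib} then yields that the solution set of \eqref{eq:sib_problem_primal} coincides with that of its equality-constrained version (feasibility for $0 \le \lambda \le I(X;Y)$ follows from $I_q(T_X;T_Y) \le I(X;Y)$ by data processing along $T_X - X - Y - T_Y$, with equality when both channels are identities).

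For part $(ii)$, given $\kappa = \kappa_\mathcal{X} \otimes \kappa_\mathcal{Y} \in C_{\mathrm{sIB}(X,Y)}$ I would define $q(x,y,t_X,t_Y) := p(x,y)\,\kappa_\mathcal{X}(t_X|x)\,\kappa_\mathcal{Y}(t_Y|y)$ and note that $\kappa$ is in bijection with the pair $(q(T_X|X),q(T_Y|Y)) = (\kappa_\mathcal{X},\kappa_\mathcal{Y})$, the factors being recovered from $\kappa$ by marginalising over $t_Y$, resp. $t_X$. A short computation gives $\kappa(p(X,Y)) = q(T_X,T_Y)$ and $\kappa(p(X)p(Y)) = q(T_X)\,q(T_Y)$, so the constraint becomes $D(\kappa(p(X,Y))\|\kappa(p(X)p(Y))) = I_q(T_X;T_Y)$. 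Using the Markov relations $T_X - X - Y$, $T_Y - Y - X$ and $T_X - (X,Y) - T_Y$ implied by the definition of $q$, I would expand the objective by the chain rule: $I_\kappa(X,Y;T) = I_q(X,Y;T_X) + I_q(X,Y;T_Y\mid T_X) = I_q(X;T_X) + \bigl(I_q(Y;T_Y) - I_q(T_X;T_Y)\bigr)$. Since the constraint fixes $I_q(T_X;T_Y) = \lambda$, the objective of \eqref{eq:sib_problem_primal_iib_formulation} differs by the constant $\lambda$ from $I_q(X;T_X)+I_q(Y;T_Y)$, and, via the bijection $\kappa \leftrightarrow (q(T_X|X),q(T_Y|Y))$, this problem has the same solution set as the equality-constrained version of \eqref{eq:sib_problem_primal}; part $(i)$ then replaces the equality by the inequality, yielding \eqref{eq:sib_problem_primal}.

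I expect the main obstacle to be the careful bookkeeping in the chain-rule expansion of $I_\kappa(X,Y;T)$ — in particular, justifying that $I_q(Y;T_X\mid X) = 0$, $I_q(T_X;T_Y\mid X,Y) = 0$ and $I_q(X;T_Y\mid Y) = 0$ from the Markov structure of $q$ — together with verifying the convexity of $g$; both are standard but must be carried out precisely. A minor point is the identification $\mathcal{T}_\mathcal{X} \times \mathcal{T}_\mathcal{Y} \cong \mathcal{T}$, which is handled exactly as in the footnote accompanying the definition of $C_{\mathrm{sIB}(X,Y)}$.
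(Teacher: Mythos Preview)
Your proposal is correct and follows essentially the same approach as the paper: part $(i)$ is obtained by applying Lemma~\ref{lemma:equality_can_be_inequality_iib} to $C = C(\mathcal{X},\mathcal{T}_\mathcal{X}) \times C(\mathcal{Y},\mathcal{T}_\mathcal{Y})$ with $f = I_q(T_X;T_Y)$ and $g = I_q(X;T_X)+I_q(Y;T_Y)$, and part $(ii)$ is obtained by rewriting the constraint as $I_q(T_X;T_Y)$ and the objective as $I_q(X;T_X)+I_q(Y;T_Y)-I_q(T_X;T_Y)$ via the Markov chain $T_X - X - Y - T_Y$, then removing the fixed term and invoking $(i)$. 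Your chain-rule decomposition $I_q(X,Y;T_X,T_Y) = I_q(X,Y;T_X) + I_q(X,Y;T_Y\mid T_X)$ is slightly more direct than the paper's (which splits on $X$ and $Y$ first and goes through intermediate entropy identities), but both routes reach the same identity.
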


\begin{proof}
    It can be easily verified that the convex set $C := C(\mathcal{X},\mathcal{T}_{\mathcal{X}}) \times C(\mathcal{Y},\mathcal{T}_{\mathcal{Y}})$, together with the functions
  \begin{align*}
    f \, : (q(T_X|X), q(T_Y|Y)) &\mapsto I_q(T_X;T_Y)
  \end{align*}
  and
  \begin{align*}
    g \, : (q(T_X|X), q(T_Y|Y)) &\mapsto I_q(X;T_X) + I(Y;T_Y),
  \end{align*}
  satisfy the assumptions of Lemma \ref{lemma:equality_can_be_inequality_iib}. Thus, the latter proves point $(i)$.

  Let us now prove $(ii)$. \hl{For $\kappa = \kappa_{\mathcal{X}} \otimes \kappa_{\mathcal{Y}}$}, we define the joint distribution $q(X,Y,T_X,T_Y)$ on \hl{$\mathcal{X} \times \mathcal{Y} \times \mathcal{T}_{\mathcal{X}} \times \mathcal{T}_{\mathcal{Y}}$} through
  \begin{align}   \label{eq:local_def_q_sIB}
    q(x,y,t_X,t_Y) := q(x,y) \kappa_{\mathcal{X}}(t_X|x) \kappa_{\mathcal{Y}}(t_Y|y).
  \end{align}
  
  In particular, $q(X,Y,T_X,T_Y)$ is such that the Markov chain $T_X - X - Y - T_Y$ holds. From the latter Markov chain\hll{, using the chain rule for mutual information, we get
  \begin{align}
    I_q(X,Y;T_X, T_Y) &= I(X;T_X, T_Y) + I(Y;T_X, T_Y|X) \nonumber \\
    &= I(X;T_X) + I(X;T_Y|T_X) + I(Y;T_X|X) +I(Y;T_Y|X,T_X) \nonumber \\
    &= I(X;T_X) + I(X;T_Y|T_X) + 0 + I(Y;T_Y|X) \label{eq:loc_1} \\
    &= I(X;T_X) + H(T_Y|T_X) - H(T_Y|T_X,X) + H(T_Y|X) - H(T_Y|X,Y) \label{eq:loc_1bis} \\
    &= I(X;T_X) + H(T_Y|T_X) - H(T_Y|X) + H(T_Y|X) - H(T_Y|Y) \label{eq:loc_2} \\
    &= I(X;T_X) + H(T_Y|T_X) - H(T_Y|Y) \nonumber \\
    &= I(X;T_X) + I(Y;T_Y) - I(T_X; T_Y) \label{eq:local_sIB_is_IIB_with_constraint_1},
  \end{align}
  where line \eqref{eq:loc_1} uses $T_X - X - Y$ and $T_X - X - (T_Y, Y)$; \hlcc{lines \eqref{eq:loc_1bis} and \eqref{eq:local_sIB_is_IIB_with_constraint_1} use the equality $I(A;B) = H(A) - H(A|B)$;} and line \eqref{eq:loc_2} uses $T_X - X - T_Y$ and $X - Y - T_Y$.} Moreover, \hll{it can be verified} that, for $\kappa = \kappa_{\mathcal{X}} \otimes \kappa_{\mathcal{Y}} = q(T_X|T) \otimes q(T_Y|Y)$, \hlll{we have $\kappa(p(X,Y)) = q(T_X,T_Y)$ and $\kappa(p(X)p(Y)) = q(T_X)q(T_Y)$, so that}
  \begin{align}   \label{eq:local_sIB_is_IIB_with_constraint_2}
    D(\kappa(p(X,Y))||\kappa(p(X)p(Y))) \hlll{= D(q(T_X, T_Y) || q(T_X) q(T_Y)) =} I_q(T_X;T_Y).
  \end{align}
  Combining \eqref{eq:local_sIB_is_IIB_with_constraint_1} and \eqref{eq:local_sIB_is_IIB_with_constraint_2} above, we get that the solutions of \eqref{eq:sib_problem_primal_iib_formulation} are also those of
  \begin{align} \label{eq:almost_sib_problem_primal}
    \argmin_{\substack{q(T_X|X), \, q(T_Y|Y) \; : \\ I_q(T_X;T_Y) = \lambda}} \; \hlcc{ \left[ I_q(X;T_X) + I_q(Y;T_Y) - I_q(T_X;T_Y) \right]},
  \end{align}
  But in the latter problem, \hlll{as the value of $I_q(T_X;T_Y)$ is fixed by the constraint, it can be removed from the target function.} Eventually, we can use point $(i)$ to conclude that the solutions \hl{of \eqref{eq:almost_sib_problem_primal}} \hl{coincide with those of the problem \eqref{eq:sib_problem_primal},} which completes the proof.
\end{proof}

Crucially, the problem \eqref{eq:sib_problem_primal} is the Symmetric IB --- more precisely, Ref. \citep{slonimMultivariateInformationBottleneck2006} defines the Lagrangian relaxation \citep{lemarechalLagrangianRelaxation2001} of \eqref{eq:sib_problem_primal}, i.e.,
\begin{align} \label{eq:sib_problem_lagrangian}
  \argmin_{q(T_X|X), \, q(T_Y|Y)} \;\hlcc{ \left[ I_q(X;T_X) + I_q(Y;T_Y) - \beta I_q(T_X;T_Y) \right]},
\end{align}
for varying parameter $\beta \geq 0$. In this sense, the IIB problem \eqref{eq:def_intertwining_ib} with additional constraint of split channel $\kappa = \kappa_{\mathcal{X}} \otimes \kappa_{\mathcal{Y}}$, i.e., the problem \eqref{eq:sib_problem_primal_iib_formulation}, is the Symmetric IB problem.

\section{Proof of Theorem \ref{th:char_equivariance_with_iib_Ixy}}
\label{apd:proof_charac_equivariance_with_iib_solutions}

In most of the proof (Sections \ref{apd:explicit_solution_iib_Ixy} and \ref{apd:charac_equivariance_equivalence_relation}), we will set ourselves in the more general framework of fully supported marginals $p(X)$ and $p(Y)$, but not necessarily fully supported joint distribution $p(X,Y)$. This more general formulation might help for future work to generalise this paper's results. However, at the end the proof (Section \ref{apd:end_proof}) we will use the assumption of fully supported $p(X,Y)$.

\paragraph{Notations} In this proof, we denote channels in $C(\mathcal{X} \times \mathcal{Y}, \mathcal{T})$ by $q(T|X,Y)$ rather than $\kappa$. For $(x,y,t) \in \mathcal{X} \times \mathcal{Y} \times \mathcal{T}$, we write
\begin{align}   \label{eq:def_joint_q_and_qtilde}
  q(x,y,t) := p(x,y) q(t|x,y), \quad \tilde{q}(x,y,t) := p(x)p(y)q(t|x,y),
\end{align}
and $q(T)$, resp. $\tilde{q}(T)$, the corresponding marginals on the bottleneck space $\mathcal{T}$.\footnote{\hlcc{Note the abuse of notation: here, $q(t|x,y)$ is well-defined even when $q(x,y) = p(x,y) = 0$.}} The symbols $\mathcal{S}$ and $\supp(p(X,Y))$ both denote the support of the distribution $p(X,Y)$. For a subset $\mathcal{A}$, we denote by $\mathcal{A}^c$ the complement of $\mathcal{A}$. We consider the equivalence relation
\begin{align}    \label{eq:def_equivalence_relation}
  (x,y) \sim (x',y') \quad \Leftrightarrow \quad \frac{p(x,y)}{p(x)p(y)} = \frac{p(x',y')}{p(x')p(y')},
\end{align}
which is always well-defined, because we assumed that $p(X)$ and $p(Y)$ are fully supported. The \hlcc{equivalence} relation $\sim$ defines a partition of $\mathcal{X} \times \mathcal{Y}$. If $\mathcal{S}^c \neq \emptyset$, then $\mathcal{S}^c$ is an element of this partition, and we write $\{\mathcal{S}_j\}_{j=1,\dots,n}$ for the other elements of the partition, which together thus define a partition of the support $\mathcal{S}$. The latter partition can be seen as the deterministic clustering
\begin{align}   \label{eq:def_clustering_equivalence_rel_support}
  \begin{split}
    \pi_{\mathcal{S}} \, : \quad \mathcal{S} &\longrightarrow \{1, \dots, n\} \\
  (x,y) &\mapsto \sum_{j=1}^n \, j \, \delta_{(x,y) \in \mathcal{S}_j}.
  \end{split}
\end{align}
We also denote by $\pi$ the deterministic clustering defined by the relation $\sim$ on the whole space $\mathcal{X} \times \mathcal{Y}$: explicitly, we set $\pi_{|\mathcal{S}} := \pi_\mathcal{S}$, and if $\mathcal{S}^c \neq \emptyset$, we set $\pi(x,y) = 0$ for $(x,y) \in \mathcal{S}^c$.

\smallskip

As we will see, the clustering $\pi_{\mathcal{S}}$ happens to be the essentially unique solution to \eqref{eq:def_intertwining_ib} for $\lambda=I(X;Y)$. To make this statement precise, we need the following notion \citep{ayInformationGeometry2017}:
\begin{definition}    \label{def:congruent_channel}
  For \hll{finite sets} $\mathcal{A}$ and $\mathcal{B}$, a channel $\gamma$ from $\mathcal{A}$ to $\mathcal{B}$ is called \emph{congruent} if for $a \neq a'$, the supports of $\gamma(B|a)$ and $\gamma(B|a')$ are disjoint. We will denote by $C_{\text{cong}}(\mathcal{A}, \mathcal{B})$ the set of congruent channels from $\mathcal{A}$ to $\mathcal{B}$.
\end{definition}
The definition says that, observing an outcome $b \in \mathcal{B}$ with nonzero probability, one can reconstruct unambiguously the $a \in \mathcal{A}$ which was originally transmitted through the channel. Thus, intuitively, a congruent channel $p(B|A)$ defines a splitting of each symbol $a \in \mathcal{A}$ into the symbol(s) of \hlll{$\text{supp}(p(B|a))$}. Note that permutations of $\mathcal{A}$ are congruent channels with $\mathcal{A}=\mathcal{B}$ and $|\text{supp}(p(B|a))| = 1$ for all $a \in \mathcal{A}$. 

\hlcc{It can be easily verified that $\gamma \in C_{\text{cong}}(\mathcal{A}, \mathcal{B})$ if and only of there is a continuous function $f : \mathcal{B} \rightarrow \mathcal{A}$ such that $f \circ \gamma = e_{\mathcal{A}}$. This can be straightforwadly shown to imply that for a joint distribution $q(A,B) \in \Delta_{\mathcal{A}, \mathcal{B}}$ and a congruent channel $\gamma = \gamma(C|B) \in C_{\text{cong}}(\mathcal{B}, \mathcal{C})$, we get a joint distribution $q(A,B,C) = q(A,B) \gamma(C|B)$ which satisfies $I_q(A;C) = I_q(A;B)$. Intuitively, this means that the composition of a channel $q(B|A)$ at the output by a congruent channel $\gamma$ can be seen as a trivial operation, in that it does not post-process any information.}

\subsection{Explicit form of IIB solutions for $\lambda=I(X;Y)$}
\label{apd:explicit_solution_iib_Ixy}

\begin{theorem}   \label{th:solutions_to_IIB}
  Let $\lambda = I(X;Y)$. The solutions to the IIB problem \eqref{eq:def_intertwining_ib} are the channels of the form
  \begin{align*}
    q(t|x,y) = \begin{cases}
      (\gamma \circ \pi_{\mathcal{S}})(t|x,y) \quad & \text{\emph{if $(x,y) \in \mathcal{S}$}} \\
      q_0(t|x,y) \quad & \text{\emph{if $(x,y) \in \mathcal{S}^c$}}
    \end{cases}
  \end{align*}
  for any congruent channel $\gamma \in C_{\text{cong}}(\{1, \dots,n\}, \mathcal{T})$, and any arbitrary channel $q_0 \in C(\mathcal{S}^c, \mathcal{T})$ on the support's complement.
\end{theorem}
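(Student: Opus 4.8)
The plan is to combine the data-processing inequality for the Kullback--Leibler divergence with a direct identification of the minimal value of $I_q(X,Y;T)$.

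\textbf{Step 1: the constraint is the equality case of data processing.} For $q = q(T|X,Y)$ let $q(T)$ and $\tilde{q}(T)$ be the laws of $T$ induced by $p(X,Y)$, resp. $p(X)p(Y)$; then the constraint of \eqref{eq:def_intertwining_ib} reads $D(q(T)||\tilde{q}(T)) = I(X;Y)$. Applying the chain rule for KL divergence to the joints $q(X,Y,T)$ and $\tilde{q}(X,Y,T)$ of \eqref{eq:def_joint_q_and_qtilde}, which share the same conditional $q(T|X,Y)$, gives
\begin{align*}
  I(X;Y) = D(q(T)||\tilde{q}(T)) + \sum_t q(t)\, D(q(X,Y|t)||\tilde{q}(X,Y|t)).
\end{align*}
Since every summand is nonnegative, the constraint holds iff $q(X,Y|t) = \tilde{q}(X,Y|t)$ for every $t$ with $q(t)>0$. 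Unpacking this with $q(x,y|t) = p(x,y)q(t|x,y)/q(t)$, $\tilde{q}(x,y|t) = p(x)p(y)q(t|x,y)/\tilde{q}(t)$, and the full support of $p(X)$ and $p(Y)$, I would show it is equivalent to: \emph{for every $t$ with $q(t)>0$, the set $\{(x,y)\in\Scal : q(t|x,y)>0\}$ is contained in a single $\sim$-class $\Scal_j$, and $q(t|x,y)=0$ for all $(x,y)\in\Scal^c$.}

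\textbf{Step 2: lower bound and its equality case.} Given a constraint-admissible $q$, Step 1 lets me define a deterministic map $f : \mathcal{T}\to\{0,1,\dots,n\}$ sending each ``used'' $t$ to the index of the class containing its $\Scal$-fibre (and $f\equiv 0$ elsewhere); then $f(T) = \pi_{\Scal}(X,Y)$ holds $q(X,Y,T)$-almost surely. Hence, by the data-processing inequality applied to the function $f$ of $T$,
\begin{align*}
  I_q(X,Y;T)\ \geq\ I_q(X,Y;f(T))\ =\ I(X,Y;\pi_{\Scal}(X,Y))\ =\ H(\pi_{\Scal}(X,Y)),
\end{align*}
the last equality because $\pi_{\Scal}$ is a deterministic function of $(X,Y)$. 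Equality forces the Markov chain $(X,Y)-f(T)-T$, i.e. $q(X,Y|t) = q(X,Y\mid f(T)=f(t))$ for a.e. used $t$; since $q(X,Y\mid f(T)=j)$ is $p(X,Y)$ conditioned on $\pi_{\Scal}(X,Y)=j$ (so proportional to $p$ on $\Scal_j$), this forces $q(t|x,y)$ to be constant in $(x,y)$ over each $\Scal_j$ --- call this value $g_j(t)$. Together with the disjoint-support condition of Step 1, the rows $g_j$ have pairwise disjoint supports, so $\gamma := (g_1,\dots,g_n)\in C_{\text{cong}}(\{1,\dots,n\},\mathcal{T})$ and $q|_{\Scal} = \gamma\circ\pi_{\Scal}$, while on $\Scal^c$ the residual channel $q_0$ is unconstrained except for being supported off $\bigcup_j\supp(g_j)$ (there is always room since $\mathcal{T}=\mathbb{N}$ is infinite).

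\textbf{Step 3: converse.} Conversely, for $\gamma\in C_{\text{cong}}(\{1,\dots,n\},\mathcal{T})$ and $q_0$ as above, the channel $q$ of the stated form satisfies the constraint: since $p(X,Y)$ is null on $\Scal^c$, its $T$-image is $(\gamma\circ\pi_{\Scal})(p(X,Y))$, which on $\supp(\gamma)$ coincides with the $T$-image of $p(X)p(Y)$ (the $q_0$-part landing outside), so by congruence of $\gamma$ (the fact recorded just after Definition \ref{def:congruent_channel}) the divergence equals $D(\pi_{\Scal}(p(X,Y))||\pi_{\Scal}(p(X)p(Y))) = \sum_j P(\Scal_j)\log r_j = I(X;Y)$, where $r_j = p(x,y)/(p(x)p(y))$ for $(x,y)\in\Scal_j$ (using $p\equiv 0$ on $\Scal^c$). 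Moreover it achieves $I_q(X,Y;T) = I(X,Y;\gamma(\pi_{\Scal}(X,Y))) = I(X,Y;\pi_{\Scal}(X,Y)) = H(\pi_{\Scal}(X,Y))$, again by congruence of $\gamma$ and because $q_0$ lives on a $p(X,Y)$-null set. By Step 2 this is the minimum, so these channels are exactly the IIB solutions.

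\textbf{Main obstacle.} The delicate part is Step 1 --- pinning down the equality case of the data-processing inequality for KL divergence in the discrete setting while carefully tracking the support $\Scal$, in particular the behaviour on $\Scal^c$ (this is also what makes the admissible $q_0$ slightly subtler than ``arbitrary'': its support must avoid $\im(\gamma)$). Everything else reduces to the chain rule for information measures and the congruence fact already established in the excerpt.
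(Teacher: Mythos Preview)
Your argument is correct and takes a genuinely different route from the paper. Where the paper characterises the constraint via the log-sum inequality (Lemma~\ref{lemma:pre-charac_constraint_set_ib}) and establishes optimality by constructing an explicit comparison channel $q'$ and applying log-sum a second time (Lemmas~\ref{lemma:charac_q_equals_qprime}--\ref{lemma:almost_charc_iib_Ixy}), you use the chain rule for KL divergence to identify the constraint as the equality case of data processing, and then the ordinary data-processing inequality for mutual information (via the deterministic post-processing $f$) to pin down the minimum and its equality case. Your approach is more conceptual and shorter: two separate log-sum computations are replaced by standard information-theoretic identities, and the equality case of data processing hands you directly the Markov structure $(X,Y)-f(T)-T$ that forces $q|_{\Scal}=\gamma\circ\pi_{\Scal}$.

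You have also been more careful than the paper on $\Scal^c$. Your Step~1 correctly derives that for every $t$ with $q(t)>0$ one must have $q(t|x,y)=0$ on $\Scal^c$, so that $q_0$ is not fully arbitrary but must be supported off $\bigcup_j\supp(\gamma(\cdot|j))$. The paper's Lemma~\ref{lemma:pre-charac_constraint_set_ib} silently restricts the sum defining $\tilde q(t)$ to $(x,y)\in\Scal$, which is legitimate for the numerator (where the factor $p(x,y)$ vanishes off $\Scal$) but not for the denominator; as a consequence the theorem's clause ``any arbitrary channel $q_0$'' is too permissive when $\Scal^c\neq\emptyset$. A quick counterexample: with $\Xcal=\Ycal=\{0,1\}$, $p(0,0)=p(1,1)=\tfrac12$, one has $n=1$; taking $\gamma(\cdot|1)=\delta_1$ and $q_0\equiv\delta_1$ gives $q(T)=\tilde q(T)=\delta_1$, hence $D(q(T)\|\tilde q(T))=0\neq I(X;Y)$. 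Under the paper's standing full-support hypothesis $\Scal=\Xcal\times\Ycal$ this issue disappears and both proofs agree, but your formulation is the one that survives the intended generalisation.
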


In short, a solution $q(T|X,Y)$ to the IIB for $\lambda = I(X;Y)$ can have an arbitrary effect on the zero probability symbols, but its restriction to the support $\mathcal{S}$ must be, up to permuting or splitting the symbols in $\mathcal{T}$, the clustering $\pi_{\mathcal{S}}$ from \eqref{eq:def_clustering_equivalence_rel_support}. The following corollary is then straightforward:

\begin{corollary} \label{cor:solutions_to_IIB_fullsupport}
  Assume that $p(X,Y)$ is fully supported, and let $\lambda = I(X;Y)$. Then the solutions to the IIB problem \eqref{eq:def_intertwining_ib} are the channels of the form
  \begin{align*}
    q(t|x,y) = (\gamma \circ \pi)(t|x,y)
  \end{align*}
  for any congruent channel $\gamma \in C_{\text{cong}}(\{1, \dots,n\}, \mathcal{T})$, where $\pi$ is the deterministic clustering defined by the relation $\sim$ (see equation \eqref{eq:def_equivalence_relation}).
\end{corollary}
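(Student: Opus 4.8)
The plan is to obtain the corollary as an immediate specialisation of Theorem \ref{th:solutions_to_IIB} to the case of a fully supported joint distribution. First I would observe that if $p(X,Y)$ is fully supported, i.e.\ $p(x,y) > 0$ for every $(x,y) \in \mathcal{X} \times \mathcal{Y}$, then its marginals $p(X)$ and $p(Y)$ are fully supported as well, so the equivalence relation $\sim$ from \eqref{eq:def_equivalence_relation} is well-defined on all of $\mathcal{X} \times \mathcal{Y}$ (as already noted in the excerpt), and the support $\mathcal{S} = \supp(p(X,Y))$ equals the whole product space $\mathcal{X} \times \mathcal{Y}$. Consequently $\mathcal{S}^c = \emptyset$, the partition of $\mathcal{X} \times \mathcal{Y}$ induced by $\sim$ consists exactly of the blocks $\{\mathcal{S}_j\}_{j=1,\dots,n}$, and by the very definition of $\pi$ --- namely $\pi_{|\mathcal{S}} := \pi_{\mathcal{S}}$, with no remaining part to specify on the empty set $\mathcal{S}^c$ --- we have $\pi = \pi_{\mathcal{S}}$ as maps on $\mathcal{X} \times \mathcal{Y}$.

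Next I would substitute these observations into the statement of Theorem \ref{th:solutions_to_IIB}. That theorem describes the solutions of the IIB problem \eqref{eq:def_intertwining_ib} for $\lambda = I(X;Y)$ as the channels that agree with $\gamma \circ \pi_{\mathcal{S}}$ on $\mathcal{S}$ for some congruent $\gamma \in C_{\text{cong}}(\{1,\dots,n\}, \mathcal{T})$, and with an arbitrary channel $q_0 \in C(\mathcal{S}^c, \mathcal{T})$ on $\mathcal{S}^c$. Since here $\mathcal{S}^c = \emptyset$, the set $C(\mathcal{S}^c, \mathcal{T})$ is trivial (there is no symbol on which to define $q_0$), so the two-case description collapses and every solution is simply $q(t|x,y) = (\gamma \circ \pi_{\mathcal{S}})(t|x,y) = (\gamma \circ \pi)(t|x,y)$ for an arbitrary congruent $\gamma$; conversely, each such channel is a solution by the same theorem. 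This yields precisely the claimed form of the IIB solutions.

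I do not expect any genuine obstacle here: the corollary is a bookkeeping step, and the only points deserving attention are (i) checking that full support of $p(X,Y)$ propagates to the marginals, so that the relation $\sim$ and the clustering $\pi$ are globally well-defined, and (ii) noting that emptiness of $\mathcal{S}^c$ removes the arbitrary component $q_0$ and identifies $\pi$ with $\pi_{\mathcal{S}}$. Both are immediate once Theorem \ref{th:solutions_to_IIB} is available; the \emph{substantive} work lies entirely in the proof of that theorem, not in the corollary.
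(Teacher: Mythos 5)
Your proof is correct and follows exactly the route the paper intends: the corollary is the specialisation of Theorem \ref{th:solutions_to_IIB} to $\mathcal{S} = \mathcal{X} \times \mathcal{Y}$, so that $\mathcal{S}^c = \emptyset$, the arbitrary $q_0$ component vanishes, and $\pi = \pi_{\mathcal{S}}$. The paper simply labels this ``straightforward''; your write-up supplies precisely the bookkeeping it leaves implicit.
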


Let us come back to the proof of Theorem \ref{th:solutions_to_IIB}.

\begin{proof}

The following sets, defined for $j=1,\dots,n$, will be central to the proof:
\begin{align} \label{eq:local_def_Tj}
  \mathcal{T}_j^q := \{ t \in \mathcal{T}: \ \exists (x,y) \in \mathcal{S}_j, \ q(t|x,y) > 0 \}.
\end{align}
Intuitively, $\mathcal{T}_j^q$ is the ``probabilistic image set'' of $\mathcal{S}_j$ through $q(T|X,Y)$: i.e., it is the subset of $\mathcal{T}$ that can be achieved with nonzero probability starting from inputs $(x,y)$ in $\mathcal{S}_j$ and using the channel $q(T|X,Y)$. \hlcc{Most of the proof below consists, intuitively, in proving that each $\mathcal{T}_j^q$ is ``essentially'' a single bottleneck symbol --- i.e., up to the trivial operation of permuting or splitting symbols with a congruent channel.} It will also be useful to consider, for $t \in \mathcal{T}$,
\begin{align} \label{eq:local_def_Sqt}
  \mathcal{S}^q_t := \{ (x,y) \in \mathcal{S} : \ q(t|x,y) > 0 \},
\end{align}
which can be seen as the ``probabilistic pre-image set'' of $t$ through $q(T|X,Y)$. 

Note that the constraint function in the IIB problem \eqref{eq:def_intertwining_ib} can be rewritten $D(q(T)||\tilde{q}(T))$. The following lemma shows that \hlc{the constraint} $D(q(T)||\tilde{q}(T)) = I(X;Y)$ is characterised by the fact that $\frac{p(x,y)}{p(x)p(y)}$ is constant on the ``pre-image'' $\mathcal{S}^q_t$ of every symbol $t$:

\begin{lemma} \label{lemma:pre-charac_constraint_set_ib}
  Let $q(T|X,Y) \in C(\mathcal{X} \times \mathcal{Y}, \mathcal{T})$. Then we always have $D(q(T)||\tilde{q}(T)) \leq I(X;Y)$, and $D(q(T)||\tilde{q}(T)) = I(X;Y)$ if and only if, for all $t \in \mathcal{T}$, there exists some $\mathcal{S}_j$ such that
  \begin{align}   \label{eq:local_s_q_t_in_s_j}
    \mathcal{S}^q_t \subseteq \mathcal{S}_j.
  \end{align}
\end{lemma}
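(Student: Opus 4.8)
The plan is to obtain both the inequality and its equality case from a single chain‑rule identity for the Kullback--Leibler divergence. Recall from \eqref{eq:def_joint_q_and_qtilde} that $q(x,y,t)=p(x,y)q(t|x,y)$ and $\tilde{q}(x,y,t)=p(x)p(y)q(t|x,y)$ are both genuine distributions on $\mathcal{X}\times\mathcal{Y}\times\mathcal{T}$ built from the \emph{same} conditional $q(T|X,Y)$; moreover $q\ll\tilde q$ since $p(X)$ and $p(Y)$ are fully supported, so all divergences below are finite. Expanding $D(q(X,Y,T)||\tilde{q}(X,Y,T))$ in the two possible orders and using that the $T$-conditionals coincide (whence the term $\sum_{x,y}q(x,y)D(q(T|x,y)||\tilde{q}(T|x,y))$ vanishes) gives
\begin{align*}
  D(p(X,Y)||p(X)p(Y)) = D(q(T)||\tilde{q}(T)) + \sum_{t} q(t)\, D(q(X,Y|t)||\tilde{q}(X,Y|t)).
\end{align*}
The left-hand side is $I(X;Y)$ and every summand on the right is nonnegative, which already yields $D(q(T)||\tilde{q}(T)) \le I(X;Y)$. (Equivalently this is just the data-processing inequality applied to the two pushforwards $q(T),\tilde{q}(T)$ of $p(X,Y),p(X)p(Y)$ under the channel $q(T|X,Y)$; or one can apply the log-sum inequality fibrewise in $t$ and then use $\sum_t q(t|x,y)=1$.)

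From the identity, equality $D(q(T)||\tilde{q}(T)) = I(X;Y)$ holds if and only if $D(q(X,Y|t)||\tilde{q}(X,Y|t))=0$ for every $t$ with $q(t)>0$, i.e. $q(X,Y|t)=\tilde{q}(X,Y|t)$ whenever $q(t)>0$. I would then unwind this: writing $q(x,y|t)=p(x,y)q(t|x,y)/q(t)$ and $\tilde{q}(x,y|t)=p(x)p(y)q(t|x,y)/\tilde{q}(t)$, and cancelling $q(t|x,y)$ on the pairs where it is positive (pairs with $q(t|x,y)=0$ contribute $0$ on both sides and constrain nothing), the condition becomes: for every $(x,y)$ with $q(t|x,y)>0$, $\tfrac{p(x,y)}{p(x)p(y)}=\tfrac{q(t)}{\tilde{q}(t)}$; that is, the likelihood ratio $(x,y)\mapsto \tfrac{p(x,y)}{p(x)p(y)}$ is constant on the fibre $\{(x,y):q(t|x,y)>0\}$. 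Now note $q(t)>0$ is equivalent to $\mathcal{S}^q_t\neq\emptyset$, and that by \eqref{eq:def_equivalence_relation} the classes $\mathcal{S}_1,\dots,\mathcal{S}_n$ are exactly the level sets of this ratio inside $\mathcal{S}$ (the ratio being $0$ off $\mathcal{S}$); hence constancy of the ratio on the fibre forces $\mathcal{S}^q_t$ to lie inside a single class $\mathcal{S}_j$, while for $t$ with $q(t)=0$ we have $\mathcal{S}^q_t=\emptyset$ and the containment \eqref{eq:local_s_q_t_in_s_j} holds vacuously. For the converse direction one reverses this computation: if each $\mathcal{S}^q_t$ lies in a single class then the per-$t$ divergences vanish, and equality follows from the displayed identity.

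The chain-rule bookkeeping and the identification of the $\mathcal{S}_j$ as the ratio's level sets are routine; the point that needs care is the equality case. Concretely, one must be precise about (i) which $t$-fibres are actually constrained — only those with $q(t)>0$, equivalently $\mathcal{S}^q_t\neq\emptyset$ — and (ii) the pairs with $q(t|x,y)=0$, which enter only through the convention $0\log 0=0$ and must be verified not to affect the constancy argument. In the fully supported case $\mathcal{S}^c=\emptyset$, the fibre of $t$ is literally $\mathcal{S}^q_t$ and there is nothing further to check; in the general case one should additionally record that equality also forces $q(t|\cdot)$ to vanish on $\mathcal{S}^c$ whenever $\mathcal{S}^q_t\neq\emptyset$ (equivalently, that each used fibre lies in a single class of the full partition $\{\mathcal{S}_1,\dots,\mathcal{S}_n,\mathcal{S}^c\}$), which refines and is consistent with \eqref{eq:local_s_q_t_in_s_j}.
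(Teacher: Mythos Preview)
Your argument is correct and rests on the same underlying inequality as the paper's, just packaged differently. The paper writes $D(q(T)\|\tilde q(T))$ and $I(X;Y)$ explicitly as sums over $t$ and applies the log-sum inequality termwise in $t$, reading off the equality case from the log-sum equality condition. You instead expand $D\bigl(q(X,Y,T)\|\tilde q(X,Y,T)\bigr)$ via the chain rule in both orders to obtain the exact identity
\[
I(X;Y)=D\bigl(q(T)\|\tilde q(T)\bigr)+\sum_t q(t)\,D\bigl(q(X,Y\mid t)\|\tilde q(X,Y\mid t)\bigr),
\]
from which the inequality and its equality case are immediate. The two routes are equivalent (the chain rule for KL \emph{is} the log-sum inequality summed and rearranged), but your packaging makes the equality analysis slightly cleaner: the remainder vanishes iff the two posteriors over $(X,Y)$ agree for every used $t$, which directly gives constancy of $p(x,y)/p(x)p(y)$ on each fibre. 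Your final remark is also a genuine byproduct of this route: in the non-fully-supported setting the posterior equality forces $q(t|\cdot)$ to vanish on $\mathcal{S}^c$ whenever $q(t)>0$, a refinement the paper's termwise computation (which restricts both numerator and denominator to $\mathcal{S}$) does not surface --- and which is in fact what your ``reverse the computation'' converse needs in order to go through outside the fully-supported case.
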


\begin{proof}
  We have
\begin{align*}
  D(q(T)||\tilde{q}(T)) &= \sum_{t} \left(\sum_{x,y} q(t|x,y) p(x,y) \right) \log\left( \frac{\sum_{x,y} q(t|x,y) p(x,y)}{\sum_{x,y} q(t|x,y) p(x)p(y)} \right) \\
  &= \sum_{t} \left(\sum_{(x,y) \in \mathcal{S}} q(t|x,y) p(x,y) \right) \log\left( \frac{\sum_{(x,y) \in \mathcal{S}} q(t|x,y) p(x,y)}{\sum_{(x,y) \in \mathcal{S}} q(t|x,y) p(x)p(y)} \right),
\end{align*}
while
\begin{align*}
  I(X;Y) &= \sum_{x,y} p(x,y) \log \left( \frac{p(x,y)}{p(x)p(y)} \right) \\
  &= \sum_{(x,y) \in \mathcal{S}} p(x,y) \log \left( \frac{p(x,y)}{p(x)p(y)} \right) \\
  &= \sum_{(x,y) \in \mathcal{S}} \left( \sum_t q(t|x,y) p(x,y) \right) \log\left( \frac{q(t|x,y) p(x,y)}{q(t|x,y) p(x)p(y)} \right) \\
  &= \sum_t \sum_{(x,y) \in \mathcal{S}} q(t|x,y) p(x,y) \log\left( \frac{q(t|x,y) p(x,y)}{q(t|x,y) p(x)p(y)} \right),
  \end{align*}
where we use the convention $0\log(\frac{0}{0}) = 0$. But from the log-sum inequality, for all $t \in \mathcal{T}$,
\begin{align}   \label{eq:local_inequality_logsum}
  \begin{split}
    \left(\sum_{(x,y) \in \mathcal{S}} q(t|x,y) p(x,y) \right) \log & \left( \frac{\sum_{(x,y) \in \mathcal{S}} q(t|x,y) p(x,y)}{\sum_{(x,y) \in \mathcal{S}} q(t|x,y) p(x)p(y)} \right) \\
    & \leq \sum_{(x,y) \in \mathcal{S}} q(t|x,y) p(x,y) \log\left( \frac{q(t|x,y) p(x,y)}{q(t|x,y) p(x)p(y)} \right).
  \end{split}
\end{align}
So that $D(q(T)||\tilde{q}(T)) \leq I(X;Y)$, \del with equality if and only if, for all $t \in \mathcal{T}$, it holds in \eqref{eq:local_inequality_logsum}. From the equality case of the log-sum inequality \citep{csiszarInformationTheoryCoding2011}, the latter is equivalent to the existence of nonzero constants $(\alpha_t)_{t \in \mathcal{T}}$ such that
\begin{align*}
  \forall (x,y) \in \mathcal{S}, \quad \quad q(t|x,y) p(x,y) \, = \, \alpha_t \, q(t|x,y) p(x)p(y),
\end{align*}
i.e., such that, for every $t$, the quantity $\frac{p(x,y)}{p(x)p(y)}$ is constant on the subset of elements $(x,y)$ \hlcc{for which} $q(t|x,y)>0$. Recalling the definitions \eqref{eq:local_def_Sqt} of $\mathcal{S}^q_t$ and \eqref{eq:def_equivalence_relation} of the relation $\sim$ defining the sets $\mathcal{S}_j$, we thus proved the following: we have $D(q(T)||\tilde{q}(T)) = I(X;Y)$ if and only if, for all $t \in \mathcal{T}$, there exists some $\mathcal{S}_j$ such that
\begin{align}   
  \mathcal{S}^q_t \, \subseteq \, \mathcal{S}_j.
\end{align}
\end{proof}

To state the next lemma, we define, for a given $q(T|X,Y)$, the channel $\gamma_q \in C(\{ 1,\dots,n \}, \mathcal{T})$ through
\begin{align}   \label{eq:def:gamma_q}
  \gamma_q(t|j) := q(t|\mathcal{T}_j^q) = \frac{q(t)}{q(\mathcal{T}_j^q)} \delta_{t \in \mathcal{T}_j^q}.
\end{align}
Note that \hlcc{here the indices $j = 1, \dots, n$ are thought of as indexing the elements $\mathcal{S} _j$ of the partition of $\mathcal{S} := \supp(p(X,Y)) \subseteq \mathcal{X} \times \mathcal{Y}$; and that} the support of $\gamma_q(\cdot|j)$ is \hlcc{exactly the ``probabilistic image'' $\mathcal{T}_j^q$ of $\mathcal{S}_j$ through $q(T|X,Y)$}.

\begin{lemma}   \label{lemma:in_delta_iib_iff_sends_partition_to_partition}
  Let $q(T|X,Y) \in C(\mathcal{X} \times \mathcal{Y}, \mathcal{T})$. Then the following are equivalent:
  \begin{enumerate}
    \item[$(i)$] $D(q(T)||\tilde{q}(T)) = I(X;Y)$,
    \item[$(ii)$] $\{ \mathcal{T}_j^q\}_{j=1,\dots,n}$ is a partition of $\supp(q(T)) \subseteq \mathcal{T}$,
    \hlc{\item[$(iii)$] The channel $\gamma_q$ defined in \eqref{eq:def:gamma_q} is congruent.}
  \end{enumerate}
\end{lemma}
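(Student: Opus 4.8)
The plan is to reduce all three statements to the single combinatorial condition that the sets $\{\mathcal{T}_j^q\}_{j=1,\dots,n}$ are pairwise disjoint. First I would record the elementary observation that, regardless of $q$, one always has $\bigcup_{j=1}^n \mathcal{T}_j^q = \supp(q(T))$: indeed $q(t) > 0$ holds if and only if $\sum_{(x,y)} q(t|x,y) p(x,y) > 0$, i.e. if and only if there is some $(x,y) \in \mathcal{S} = \supp(p(X,Y))$ with $q(t|x,y) > 0$, i.e. if and only if $t \in \mathcal{T}_j^q$ for the (unique) index $j$ with $(x,y) \in \mathcal{S}_j$. Hence condition $(ii)$ --- that $\{\mathcal{T}_j^q\}$ partitions $\supp(q(T))$ --- is equivalent to the bare requirement that $\mathcal{T}_j^q \cap \mathcal{T}_{j'}^q = \emptyset$ for $j \neq j'$.

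Next I would dispatch $(ii) \Leftrightarrow (iii)$ by unwinding the definition of $\gamma_q$. Since $\gamma_q(t|j) = \frac{q(t)}{q(\mathcal{T}_j^q)}\, \delta_{t \in \mathcal{T}_j^q}$, and since $t \in \mathcal{T}_j^q$ already forces $q(t) > 0$ (any witness $(x,y) \in \mathcal{S}_j$ with $q(t|x,y) > 0$ has $p(x,y) > 0$), the support of $\gamma_q(\cdot|j)$ is exactly $\mathcal{T}_j^q$. By Definition \ref{def:congruent_channel}, $\gamma_q$ is congruent precisely when these supports are pairwise disjoint, i.e. when the $\mathcal{T}_j^q$ are pairwise disjoint, which by the previous paragraph is $(ii)$.

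For $(i) \Leftrightarrow (ii)$ I would invoke Lemma \ref{lemma:pre-charac_constraint_set_ib}, which states that $(i)$ is equivalent to: for every $t \in \mathcal{T}$ there exists some $\mathcal{S}_j$ with $\mathcal{S}_t^q \subseteq \mathcal{S}_j$. For the forward direction, suppose some $t$ lay in both $\mathcal{T}_j^q$ and $\mathcal{T}_{j'}^q$ with $j \neq j'$; then $\mathcal{S}_t^q$ would contain a point of $\mathcal{S}_j$ and a point of $\mathcal{S}_{j'}$, contradicting its containment in a single block, so the $\mathcal{T}_j^q$ are pairwise disjoint. Conversely, assume the $\mathcal{T}_j^q$ are pairwise disjoint and fix $t$ with $\mathcal{S}_t^q \neq \emptyset$ (the case $\mathcal{S}_t^q = \emptyset$ being trivial); then $q(t) > 0$, so $t$ lies in a unique $\mathcal{T}_j^q$, and any $(x,y) \in \mathcal{S}_t^q$ lies in some block $\mathcal{S}_k$, which forces $t \in \mathcal{T}_k^q$ and hence $k = j$ by uniqueness; thus $\mathcal{S}_t^q \subseteq \mathcal{S}_j$, the hypothesis of Lemma \ref{lemma:pre-charac_constraint_set_ib} holds, and $(i)$ follows.

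The argument is essentially bookkeeping; the only point requiring a little care is the consistent handling of the edge cases --- bottleneck symbols $t$ outside $\supp(q(T))$, and blocks $\mathcal{S}_j$ whose image $\mathcal{T}_j^q$ might a priori be empty --- which is why I would pin down the identities $\supp(\gamma_q(\cdot|j)) = \mathcal{T}_j^q$ and $\bigcup_j \mathcal{T}_j^q = \supp(q(T))$ at the outset. I do not anticipate a serious obstacle beyond that.
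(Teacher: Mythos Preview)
Your proposal is correct and follows essentially the same approach as the paper: both reduce $(ii)$ to pairwise disjointness of the $\mathcal{T}_j^q$ via the covering identity $\bigcup_j \mathcal{T}_j^q = \supp(q(T))$, handle $(ii)\Leftrightarrow(iii)$ by identifying $\supp(\gamma_q(\cdot|j)) = \mathcal{T}_j^q$, and handle $(i)\Leftrightarrow(ii)$ by combining Lemma~\ref{lemma:pre-charac_constraint_set_ib} with the observation that $t \in \mathcal{T}_j^q$ iff $\mathcal{S}_t^q \cap \mathcal{S}_j \neq \emptyset$. Your write-up is in fact slightly more careful about the edge cases ($t \notin \supp(q(T))$, $\mathcal{S}_t^q = \emptyset$) than the paper's, but the substance is identical.
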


\begin{proof}
Note that it clearly follows from the definition \eqref{eq:local_def_Tj} that the union of the sets $\mathcal{T}_j^q$ is $\supp(q(T))$, so that these sets define a partition of $\supp(q(T))$ if and only if they are disjoint. Moreover, the definition \eqref{eq:local_def_Tj} of $\mathcal{T}_j^q$ can be reformulated as
\begin{align}   \label{eq:local_def_Tj_alternative}
  \mathcal{T}_j^q= \{ t \in \supp(q(T)) : \ \mathcal{S}^q_t \cap \mathcal{S}_j \neq \emptyset\}.
\end{align}
which means, intuitively, that a symbol $t$ is in the \hlll{(probabilistic)} image $\mathcal{T}_j^q$ of $\mathcal{S}_j$ through $q(T|X,Y)$ if and only if the \hlll{(probabilistic)} pre-image $S^q_t$ of $t$ intersects the set $\mathcal{S}_j$.

Assume that $D(q(T)||\tilde{q}(T)) = I(X;Y)$ holds. Then Lemma \ref{lemma:pre-charac_constraint_set_ib} and the fact that the $\mathcal{S}_j$ are disjoint imply that $\mathcal{S}^q_t \cap \mathcal{S}_j \neq \emptyset \Leftrightarrow \mathcal{S}^q_t \subseteq \mathcal{S}_j$ \hlcc{for $t \in \supp(q(T))$ (note that $q(t)>0$ implies $\mathcal{S}_t^q \neq \emptyset$).} So that
\begin{align}   \label{eq:local_charac_tj_with_S_q_t}
  \mathcal{T}_j^q= \{ t \in \supp(q(T)) : \ \mathcal{S}^q_t \subseteq \mathcal{S}_j \}.
\end{align}
Therefore, once again because the $\mathcal{S}_j$ are disjoint, the sets $\mathcal{T}_j^q$ must also be disjoint, and they define a partition of $\supp(q(T))$.

Conversely, assume that $\{ \mathcal{T}_j^q \}_{j=1,\dots,n}$ is a partition of $\supp(q(T))$. If there is some $t \in \supp(q(T))$ such that we have both $\mathcal{S}^q_t \cap \mathcal{S}_j \neq \emptyset$ and $\mathcal{S}^q_t \cap \mathcal{S}_{j'} \neq \emptyset$, \del then from \eqref{eq:local_def_Tj_alternative}, we have $t \in \mathcal{T}_j^q \cap \mathcal{T}_{j'}^q$. Thus for all $t \in \supp(q(T))$, there is \hlc{at most one} $j \in \{1, \dotsm,n\}$ such that $\mathcal{S}^q_t \cap \mathcal{S}_j \neq \emptyset$. As the union of the $\mathcal{S}_j$ over $j \in \{ 1, \dots, n \}$ is $\mathcal{S}$, and as by definition, $\mathcal{S}^q_t$ is included in $\mathcal{S}$, this means that \hlc{there exists a (unique) $j$ such that} $\mathcal{S}_t^q \subseteq \mathcal{S}_j$. Therefore, from Lemma \ref{lemma:pre-charac_constraint_set_ib}, we must have  $D(q(T)||\tilde{q}(T)) = I(X;Y)$, and the equivalence of points $(i)$ and $(ii)$ is proven.

\hlc{The equivalence of points $(ii)$ and $(iii)$ is a consequence of the fact that for all $j$, the support of $\gamma_q(\cdot|j)$ is precisely $\mathcal{T}_j^q$. Thus $\gamma_q$ is congruent if and only if the $\mathcal{T}_j^q$ are disjoint, which as already noticed, is equivalent to them defining a partition of $\supp(q(T))$.}

\end{proof}

\hlc{Now that we described what it means for a channel $q(T|X,Y)$ to satisfy the constraint of the IIB problem, let us describe the implications of it also minimising the target function. For that purpose, it will be convenient to consider, for any $q(T|X,Y) \in C(\mathcal{X} \times \mathcal{Y}, \mathcal{T})$ satisfying the constraint $D(q(T)||\tq(T)) = D(p(X,Y)||\hlcc{p(X)p(Y)})$, the channel $q'(T|X,Y)$ defined by
\begin{align} \label{eq:loc:def_qprime_1}
  q'(t|x,y) := \begin{cases}
    q(t|\mathcal{T}_j^q) q(\mathcal{T}_j^q|x,y) \quad &\text{if $(x,y) \in \mathcal{S}$} \\
    q(t|x,y) \quad &\text{if $(x,y) \in \mathcal{S}^c$}
  \end{cases},
\end{align}
where $j$ is the unique index such that $t \in \mathcal{T}_j^q$. We know that such a $j$ exists because, from Lemma \ref{lemma:in_delta_iib_iff_sends_partition_to_partition} and the assumption that $q(T|X,Y)$ satisfies the constraint, \hlcc{$\{ \mathcal{T}_j^q \}_j$ is a partition of $\supp(q(T))$. The latter also ensures that $q'(T|X,Y)$ thus defined is indeed a conditional probability. Intuitively, the channel $q'(T|X,Y)$ modifies $q(T|X,Y)$ so that it becomes factorisable, on $\mathcal{S}$, by the clustering $\pi_{\mathcal{S}}$ (see equation \eqref{eq:def_clustering_equivalence_rel_support} and point $(i)$ in Lemma \ref{lemma:charac_q_equals_qprime} below).} We also consider the "probabilistic images" of each $\mathcal{S}_j$ through $q'$, i.e.,
\begin{align*}
  \mathcal{T}_j^{q'} = \{ t \in \mathcal{T}: \ \exists (x,y) \in \mathcal{S}_j, \ \ q'(t|x,y) > 0 \}.
\end{align*}
\begin{lemma} \label{lemma:charac_q_equals_qprime}
  Let $q(T|X,Y) \in C(\mathcal{X} \times \mathcal{Y}, \mathcal{T})$ such that $D(q(T)||\tq(T)) = D(p(X,Y)||\hlcc{p(X)p(Y)})$. Then:
  \begin{enumerate}
    \item[$(i)$] For $(x,y) \in \mathcal{S}$, we have $q(\mathcal{T}_j^q|x,y) = \delta_{(x,y) \in \mathcal{S}_j}$.
    \item [$(ii)$]$q'(T) = q(T)$.
    \item[$(iii)$] $\mathcal{T}_j^q = \mathcal{T}_j^{q'}$ for all $j$.
    \item[$(iv)$] $I_q(X,Y;T) \geq I_{q'}(X,Y;T)$, and equality holds if and only if $q(T|X,Y) = q'(T|X,Y)$.
  \end{enumerate}
\end{lemma}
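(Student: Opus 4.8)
The hypothesis $D(q(T)||\tq(T)) = D(p(X,Y)||p(X)p(Y))$ is precisely the constraint of the IIB problem for $\lambda = I(X;Y)$, so Lemmas \ref{lemma:pre-charac_constraint_set_ib} and \ref{lemma:in_delta_iib_iff_sends_partition_to_partition} apply: the sets $\{\mathcal{T}_j^q\}_j$ form a partition of $\supp(q(T))$, and by \eqref{eq:local_charac_tj_with_S_q_t} we have $\mathcal{T}_j^q = \{ t \in \supp(q(T)) : \mathcal{S}^q_t \subseteq \mathcal{S}_j \}$. Part $(i)$ is then immediate: if $(x,y) \in \mathcal{S}_j$ and $q(t|x,y) > 0$, then $(x,y) \in \mathcal{S}^q_t$, so $\mathcal{S}^q_t \cap \mathcal{S}_j \neq \emptyset$, and since the blocks $\mathcal{S}_{j'}$ are disjoint this upgrades to $\mathcal{S}^q_t \subseteq \mathcal{S}_j$, i.e. $t \in \mathcal{T}_j^q$. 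Hence $\supp(q(\cdot|x,y)) \subseteq \mathcal{T}_j^q$, and since the $\mathcal{T}_{j'}^q$ are disjoint, $q(\mathcal{T}_{j'}^q|x,y) = \delta_{j'=j}$ for $(x,y) \in \mathcal{S}_j$, which is exactly the claim $q(\mathcal{T}_j^q|x,y) = \delta_{(x,y) \in \mathcal{S}_j}$.

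Parts $(ii)$ and $(iii)$ follow from $(i)$ by short computations. Summing $(i)$ against $p$ yields $q(\mathcal{T}_j^q) = \sum_{(x,y) \in \mathcal{S}_j} p(x,y) = p(\mathcal{S}_j)$, hence $\gamma_q(t|j) = q(t|\mathcal{T}_j^q) = q(t)/p(\mathcal{S}_j)$ for $t \in \mathcal{T}_j^q$ (see \eqref{eq:def:gamma_q}). Plugging \eqref{eq:loc:def_qprime_1} and $(i)$ into $q'(t) = \sum_{x,y} p(x,y) q'(t|x,y)$ collapses the sum to the block $\mathcal{S}_{j(t)}$, where $j(t)$ is the unique index with $t \in \mathcal{T}_{j(t)}^q$, and gives $q'(t) = q(t|\mathcal{T}_{j(t)}^q)\, p(\mathcal{S}_{j(t)}) = q(t)$, which is $(ii)$. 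For $(iii)$, evaluating $q'(t|x,y)$ for $(x,y) \in \mathcal{S}_j$ via $(i)$ shows it is positive precisely when $t \in \mathcal{T}_j^q$, so $\mathcal{T}_j^{q'} = \mathcal{T}_j^q$.

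Part $(iv)$ is the substantive step, which I would prove via a chain-rule argument around the cluster variable $J := \pi_{\mathcal{S}}(X,Y)$. The crucial point is that, under $q$, $J$ is a deterministic function not only of $(X,Y)$ but also of $T$: whenever $q(t) > 0$, the symbol $t$ lies in a single block $\mathcal{T}_j^q$, and any $(x,y)$ with $q(t|x,y) > 0$ lies in $\mathcal{S}^q_t \subseteq \mathcal{S}_j$, forcing $J = j$. Consequently $I_q(X,Y;T) = I_q(J;T) + I_q(X,Y;T \mid J) = H(J) + I_q(X,Y;T \mid J)$, and by $(ii)$ and $(iii)$ the same identity holds for $q'$ with the same $H(J)$, since the law of $(X,Y)$ is $p$ in both cases. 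But by \eqref{eq:loc:def_qprime_1} and $(i)$, $q'(t|x,y) = \gamma_q(t \mid \pi_{\mathcal{S}}(x,y))$ on $\mathcal{S}$, which depends on $(x,y)$ only through $J$; hence $(X,Y)$ and $T$ are conditionally independent given $J$ under $q'$, so $I_{q'}(X,Y;T \mid J) = 0$. This gives $I_q(X,Y;T) = H(J) + I_q(X,Y;T \mid J) \geq H(J) = I_{q'}(X,Y;T)$, with equality if and only if $I_q(X,Y;T \mid J) = 0$, i.e. if and only if $(X,Y)$ and $T$ are conditionally independent given $J$ under $q$. That independence says $q(t|x,y) = q(t \mid J = j)$ for all $(x,y) \in \mathcal{S}_j$, and a direct computation (again using $(i)$ and $q(\mathcal{T}_j^q) = p(\mathcal{S}_j)$) gives $q(t \mid J = j) = q(t)\, \delta_{t \in \mathcal{T}_j^q}/p(\mathcal{S}_j) = \gamma_q(t|j) = q'(t|x,y)$; since $q$ and $q'$ agree on $\mathcal{S}^c$ by the definition \eqref{eq:loc:def_qprime_1}, equality holds if and only if $q = q'$.

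The one place requiring genuine care is the support bookkeeping in the proof of $(iv)$: keeping track of which $t$ satisfy $q(t) > 0$, noting that symbols $t \notin \supp(q(T))$ contribute nothing to any marginal or mutual information (so the edge case of \eqref{eq:loc:def_qprime_1}, where no block $\mathcal{T}_j^q$ contains $t$, is harmless), and cleanly justifying the ``$J$ is a function of $T$'' claim under both $q$ and $q'$, on which the whole chain-rule decomposition rests. Everything else is essentially forced once $(i)$ is established.
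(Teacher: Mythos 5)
Your parts $(i)$--$(iii)$ follow the paper's proof essentially verbatim (direct use of Lemma \ref{lemma:in_delta_iib_iff_sends_partition_to_partition}, the disjointness of the $\{\Tcal_{j'}^q\}$, and the identity $q(\Tcal_j^q)=p(\Scal_j)$), so there is nothing to add there.

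For part $(iv)$ you take a genuinely different route. The paper works purely locally: it expands $I_q(X,Y;T)$ as a triple sum over $j$, $(x,y)\in\Scal_j$, $t\in\Tcal_j^q$, applies the log-sum inequality once per $(j,x,y)$ to obtain $\sum_{t\in\Tcal_j^q} q(t|x,y)\log\frac{q(t|x,y)}{q(t)} \ge \log\frac{1}{q(\Tcal_j^q)}$, and then identifies the collapsed right-hand side with the corresponding sum for $q'$, with the equality case of log-sum directly yielding $q(t|x,y)=q(t|\Tcal_j^q)q(\Tcal_j^q|x,y)$, i.e.\ $q=q'$ on $\Scal$. You instead introduce the cluster variable $J:=\pi_\Scal(X,Y)$, observe that under both $q$ and $q'$ it is a.s.\ a function of $T$ as well as of $(X,Y)$, and decompose $I_q(X,Y;T)=H(J)+I_q(X,Y;T\mid J)$ by the chain rule, with the analogous decomposition for $q'$ having vanishing second term because $q'(\cdot|x,y)$ factors through $J$. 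The inequality then reduces to $I_q(X,Y;T\mid J)\ge 0$, and the equality case to conditional independence of $(X,Y)$ and $T$ given $J$, which you correctly unwind to $q=q'$. Both arguments are correct and ultimately draw on the same convexity fact, but yours is more structural: it makes transparent that the optimal value is exactly $H(\pi_\Scal(X,Y))$ (which the paper later re-derives separately in Lemma \ref{lemma:value_target_constraint_for_solutions}), while the paper's local log-sum manipulation gives the pointwise equality characterisation more directly. Your attention to the support bookkeeping (the harmless edge case in \eqref{eq:loc:def_qprime_1} for $t\notin\supp(q(T))$, and the a.s.\ functional dependence $J=J(T)$) is exactly the care this decomposition needs.
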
}

\begin{proof}
  \hlc{First, let us recall that because we assume $D(q(T)||\tq(T)) = D(p(X,Y)||\hlcc{p(X)p(Y)})$, Lemma \ref{lemma:in_delta_iib_iff_sends_partition_to_partition} ensures that $\{ \mathcal{T}_j^q \}_j$ is a partition of $\supp(q(T))$.}

  \hlc{$(i)$. If $(x,y) \in \mathcal{S}_j$, then by definition of $\mathcal{T}_j^q$ as the probabilistic image set of $\mathcal{S}_j$, we have $q(\mathcal{T}_j^q|x,y) = 1$. If $(x,y) \in \mathcal{S} \setminus \mathcal{S}_j$, then $q(\mathcal{T}_j^q|x,y) = 0$ is a consequence of the fact that the $\{ \mathcal{T}_{j'}^q \}_{j'}$ are disjoint.}
  
  \hlc{$(ii)$. For $t \in \supp(q(T))$ and $j$  the unique index such that $t \in \mathcal{T}_j^q$,
  \begin{align*}
      q'(t) &= \sum_{(x,y) \in \mathcal{X}} q'(t|x,y) p(x,y) \\
      &= \sum_{(x,y) \in \mathcal{S}} q'(t|x,y) p(x,y) \\
      &= \sum_{(x,y) \in \mathcal{S}}  q(t|\mathcal{T}_j^q) q(\mathcal{T}_j^q|x) p(x,y) \\
      &= q(t|\mathcal{T}_j^q) q(\mathcal{T}_j^q) \\
      &= q(t),
  \end{align*}
  \hlcc{where the last line uses $q(t|\mathcal{T}_j^q) = \frac{q(t)}{q(\mathcal{T}_j^q)} \delta_{t \in \mathcal{T}_j^q}$.}
  }

  \hlc{$(iii)$. For fixed index $j$ and $t \in \mathcal{T}$,
  \begin{align*}
      \exists (x,y) \in \mathcal{S}_j: \ q'(t|x,y) >0 \quad &\Leftrightarrow \quad \exists (x,y) \in \mathcal{S}_j: \ q(t|\mathcal{T}_j^q) q(\mathcal{T}_j^q|x,y) >0 \\
      &\Leftrightarrow \quad \exists (x,y) \in \mathcal{S}_j: \ \text{$t \in \mathcal{T}_j^q$ and $(x,y) \in \mathcal{S}_j$} \\
      &\Leftrightarrow \quad t \in \mathcal{T}_j^q
  \end{align*}
  where the first line uses $\mathcal{S}_j \subseteq \mathcal{S}$; and the second line uses point $(i)$ and $q(t|\mathcal{T}_j^q) = \frac{q(t)}{q(\mathcal{T}_j^q)} \delta_{t \in \mathcal{T}_j^q}$.}
  
  \hlc{$(iv)$. First write, with the convention $0 \log(\frac{0}{0}) = 0$,
  \begin{align} \label{eq:loc:rewriting_mutual_info}
    \begin{split}
    I_q(X,Y;T) &= \sum_{x,y,t} p(x,y) q(t|x,y) \log\left( \frac{q(t|x,y)}{q(t)} \right) \\
    &= \sum_{(x,y) \in \mathcal{S}} p(x,y) \sum_{t \in \supp(q(T))} q(t|x,y) \log\left( \frac{q(t|x,y)}{q(t)} \right) \\
    &= \sum_{j=1}^n \sum_{(x,y) \in \mathcal{S}_j} p(x,y) \sum_{t \in \supp(q(T))} q(t|x,y) \log\left( \frac{q(t|x,y)}{q(t)} \right) \\
    &= \sum_{j=1}^n \sum_{(x,y) \in \mathcal{S}_j} p(x,y) \sum_{t \in \mathcal{T}_j^q} q(t|x,y) \log\left( \frac{q(t|x,y)}{q(t)} \right),
  \end{split}
  \end{align}
  where the second equality uses the fact that if $(x,y) \in \mathcal{S}$, then $q(t) = 0$ implies that $q(t|x) = 0$; and the last equality follows from the definition of $T^q_j$ as the ``probabilistic image set'' of $\mathcal{S}_j$ (see equation \eqref{eq:local_def_Tj}).
  Yet, using once again the log-sum inequality, we have, for all $j=1,\dots,n$ and all $(x,y) \in \mathcal{S}_j$,
  \begin{align*}
    \sum_{t \in \mathcal{T}_j^q} q(t|x,y) \log & \left( \frac{q(t|x,y)}{q(t)} \right) \geq  \left(\sum_{t \in \mathcal{T}_j^q} q(t|x,y) \right) \log \left( \frac{\sum_{t \in \mathcal{T}_j^q} q(t|x,y)}{\sum_{t \in \mathcal{T}_j^q} q(t)} \right),
  \end{align*}
  i.e,
  \begin{align}   \label{eq:local_logsum2}
    \sum_{t \in \mathcal{T}_j^q} q(t|x,y) \log & \left( \frac{q(t|x,y)}{q(t)} \right) \geq  q(\mathcal{T}_j^q|x,y) \log \left( \frac{q(\mathcal{T}_j^q|x,y)}{q(\mathcal{T}_j^q)} \right),
  \end{align}
  with equality if and only if for all $t \in \mathcal{T}_j^q$,
  \begin{align*}
    \frac{q(t|x,y)}{q(t)} = \frac{q(\mathcal{T}_j^q|x,y)}{q(\mathcal{T}_j^q)},
  \end{align*}
  i.e.,
  \begin{align} \label{eq:local_charac_minimal_IXYT_prov}
    q(t|x,y) = q(t|\mathcal{T}_j^q) q(\mathcal{T}_j^q|x,y).
  \end{align}
  Moreover, note that for $(x,y) \in \mathcal{S}_j \subseteq \mathcal{S}$, the right-hand-side of \eqref{eq:local_logsum2} can be rewritten
  \begin{align*}
      q(\mathcal{T}_j^q|x,y) \log \left( \frac{q(\mathcal{T}_j^q|x,y)}{q(\mathcal{T}_j^q)} \right) &= \left( \sum_{t \in \mathcal{T}_j^{q}} q(t|\mathcal{T}_j^q) \right) q(\mathcal{T}_j^q|x,y) \log \left( \frac{q(\mathcal{T}_j^q|x,y)}{q(\mathcal{T}_j^q)} \right) \\
      &= \sum_{t \in \mathcal{T}_j^{q}} q(t|\mathcal{T}_j^q) q(\mathcal{T}_j^q|x,y) \log \left( \frac{q(t|\mathcal{T}_j^q) q(\mathcal{T}_j^q|x,y)}{q(t)} \right) \\
      &= \hlcc{\sum_{t \in \mathcal{T}_j^{q}} q'(t|x,y) \log \left( \frac{q'(t|x,y)}{q(t)} \right)} \\
      &= \hlcc{\sum_{t \in \mathcal{T}_j^{q'}} q'(t|x,y) \log \left( \frac{q'(t|x,y)}{q'(t)} \right)},
  \end{align*}
  where the last equality uses points $(ii)$ and $(iii)$ just proven. Thus, multipling by $p(x,y)$ and summing both sides of \eqref{eq:local_logsum2} over $j=1,\dots, n$ and $(x,y) \in \mathcal{S}_j$, we get $I_q(X,Y;T) \geq I_{q'}(X,Y;T)$. Considering the equality case of the log-sum inequality then yields, from \eqref{eq:local_charac_minimal_IXYT_prov},
  \begin{align*}
    I_q(X,Y;T) = I_{q'}(X,Y;T) \quad &\Leftrightarrow \quad \forall j=1,\dots,n, \forall (x,y) \in \mathcal{S}_j, \forall t \in \mathcal{T}_j^q, \quad q(t|x,y) = q(t|\mathcal{T}_j^q) q(\mathcal{T}_j^q|x,y) \\
    &\Leftrightarrow \quad \forall (x,y) \in \mathcal{S}, \forall j=1,\dots,n, \forall t \in \mathcal{T}_j^q, \quad q(t|x,y) = q(t|\mathcal{T}_j^q) q(\mathcal{T}_j^q|x,y) \\
    &\Leftrightarrow \quad \forall (x,y) \in \mathcal{S}, \forall t \in \supp(q(T)), \quad q(t|x,y) = q(t|\mathcal{T}_j^q) q(\mathcal{T}_j^q|x,y) \\
    &\Leftrightarrow \quad \forall (x,y) \in \mathcal{S}, \forall t \in \mathcal{T}, \quad q(t|x,y) = q(t|\mathcal{T}_j^q) q(\mathcal{T}_j^q|x,y) \\
    &\Leftrightarrow \quad q(T|X,Y) = q'(T|X,Y),
  \end{align*}
  where the second line uses point $(i)$ and the fact that $q(t|x,y) = 0$ for $t \in \mathcal{T}_j^q$ but $(x,y) \in \mathcal{S} \setminus \mathcal{S}_j$ (because the $\{ \mathcal{T}_{j'}^q \}_{j'}$ are disjoint); the third one that $\cup_j \mathcal{T}_j^q = \supp(q(T))$; the fourth one that $(x,y) \in \mathcal{S} := \supp(p(X,Y))$ and $t \notin \supp(q(T))$ implies $q(t|x,y) = q(t|\mathcal{T}_j^q) = 0$.}
\end{proof}

\hlc{\begin{lemma}   \label{lemma:almost_charc_iib_Ixy}
  Let $q(T|X,Y) \in C(\mathcal{X} \times \mathcal{Y}, \mathcal{T})$. If $q(T|X,Y)$ solves the IIB problem \eqref{eq:def_intertwining_ib} with $\lambda = D(p(X,Y)||\hlcc{p(X)p(Y)})$, then for all $(x,y) \in \mathcal{S}$,
  \begin{align} \label{eq:pre-charac_sol_to_IIB_Ixy}
      q(t|x,y) = \sum_{j=1}^n q(t|\mathcal{T}_{j}^q) \delta_{(x,y) \in \mathcal{S}_j}.
  \end{align}
\end{lemma}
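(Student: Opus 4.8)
The plan is to show that an optimal $q$ must coincide on the support $\mathcal{S}$ with its ``factorised'' variant $q'$ from \eqref{eq:loc:def_qprime_1}, since \eqref{eq:pre-charac_sol_to_IIB_Ixy} is, up to a harmless reindexing, exactly the identity $q(T|X,Y) = q'(T|X,Y)$ restricted to $\mathcal{S}$. Concretely, I would run the chain: (a) an optimal $q$ satisfies the IIB constraint, so $\{\mathcal{T}_j^q\}_j$ is a partition of $\supp(q(T))$ and $q'$ is well-defined; (b) $q'$ is itself admissible for the IIB; (c) optimality of $q$ plus Lemma \ref{lemma:charac_q_equals_qprime}$(iv)$ forces $q = q'$; (d) unfold $q=q'$ on $\mathcal{S}$.

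For step (a): since $q$ solves \eqref{eq:def_intertwining_ib} at $\lambda = I(X;Y) = D(p(X,Y)||p(X)p(Y))$, it in particular satisfies the equality constraint $D(q(T)||\tq(T)) = I(X;Y)$, so Lemma \ref{lemma:in_delta_iib_iff_sends_partition_to_partition} yields that $\{\mathcal{T}_j^q\}_{j=1,\dots,n}$ partitions $\supp(q(T))$; equivalently, every $t \in \supp(q(T))$ lies in a unique block $\mathcal{T}_{j(t)}^q$, which is exactly what makes the definition \eqref{eq:loc:def_qprime_1} meaningful.

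Step (b) is the only point that needs a moment's care: one must check that $q'$ satisfies the same \emph{equality} constraint, not merely that it has the same $T$-marginal. This follows by feeding parts $(ii)$ and $(iii)$ of Lemma \ref{lemma:charac_q_equals_qprime} back into Lemma \ref{lemma:in_delta_iib_iff_sends_partition_to_partition}: since $q'(T) = q(T)$ and $\mathcal{T}_j^{q'} = \mathcal{T}_j^q$ for every $j$, the family $\{\mathcal{T}_j^{q'}\}_j$ is a partition of $\supp(q'(T))$, hence $D(q'(T)||\tilde{q}'(T)) = I(X;Y)$, so $q'$ is admissible. Step (c) is then immediate: admissibility of $q'$ together with optimality of $q$ gives $I_q(X,Y;T) \le I_{q'}(X,Y;T)$, while Lemma \ref{lemma:charac_q_equals_qprime}$(iv)$ gives the reverse inequality; hence equality holds, and the equality clause of $(iv)$ forces $q(T|X,Y) = q'(T|X,Y)$.

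For step (d), I would fix $(x,y) \in \mathcal{S}$ and $t \in \mathcal{T}$ and expand $q = q'$ using \eqref{eq:loc:def_qprime_1} and Lemma \ref{lemma:charac_q_equals_qprime}$(i)$ (which gives $q(\mathcal{T}_j^q|x,y) = \delta_{(x,y)\in\mathcal{S}_j}$): for $t \in \supp(q(T))$ this gives $q(t|x,y) = q(t|\mathcal{T}_{j(t)}^q)\,\delta_{(x,y)\in\mathcal{S}_{j(t)}}$, which equals $\sum_{j=1}^n q(t|\mathcal{T}_j^q)\,\delta_{(x,y)\in\mathcal{S}_j}$ because $q(t|\mathcal{T}_j^q) = \frac{q(t)}{q(\mathcal{T}_j^q)}\delta_{t\in\mathcal{T}_j^q}$ vanishes for $j \ne j(t)$ (the blocks being disjoint); and for $t \notin \supp(q(T))$ both sides vanish, since $q(t)=0$ forces $q(t|x,y)=0$ (as $p(x,y)>0$ on $\mathcal{S}$) and each $q(t|\mathcal{T}_j^q)=0$. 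There is no deep obstacle: all the analytic content (the two uses of the log-sum inequality) has already been absorbed into Lemma \ref{lemma:charac_q_equals_qprime}, so the only things to watch are the admissibility check in step (b) and the passage from ``the unique block $j(t)$'' to the full sum over $j$ in step (d).
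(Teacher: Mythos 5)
Your proposal is correct and follows the same route as the paper: establish that $q$ satisfies the equality constraint (so $\{\mathcal{T}_j^q\}_j$ partitions $\supp(q(T))$ and $q'$ is well-defined), use parts $(ii)$--$(iii)$ of Lemma~\ref{lemma:charac_q_equals_qprime} together with Lemma~\ref{lemma:in_delta_iib_iff_sends_partition_to_partition} to show $q'$ is admissible, invoke part $(iv)$ plus optimality to force $q=q'$, then unfold via part $(i)$. Your step (d) is in fact a bit more careful than the paper's one-line conclusion, spelling out why the single surviving block $j(t)$ can be rewritten as the full sum over $j$ and why both sides vanish off $\supp(q(T))$.
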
}

\begin{proof}
  \hlc{Let us fix a solution $q(T|X,Y)$ to the IIB problem with $\lambda = D(p(X,Y)||\hlcc{p(X)p(Y)})$. In particular, $q$ satisfies the constraint $D(q(T)||\tq(T)) = D(p(X,Y)||\hlcc{p(X)p(Y)})$, so that from Lemma \ref{lemma:in_delta_iib_iff_sends_partition_to_partition}, $\{ \mathcal{T}_j^{q} \}_j$ is a partition of $\supp(q(T))$. Thus from points $(ii)$ and $(iii)$ in Lemma \ref{lemma:charac_q_equals_qprime}, $\{ \mathcal{T}_j^{q'} \}_j$  is a partition of $\supp(q'(T))$. From Lemma  \ref{lemma:in_delta_iib_iff_sends_partition_to_partition} again, we conclude that $D(q'(T)||\tilde{q'}(T)) = D(p(X,Y)||\hlcc{p(X)p(Y)})$: i.e., $q'(T|X,Y)$ satifies the constraint of the IIB problem.}

  \hlc{On the other hand, from point $(iv)$ in Lemma \ref{lemma:charac_q_equals_qprime}, $q(T|X,Y) \neq q'(T|X,Y)$ is only possible if $I_q(X,Y;T) > I_{q'}(X,Y;T)$. Thus, if $q(T|X,Y) \neq q'(T|X,Y)$, then $q'(T|X,Y)$ both satisfies the constraint of the IIB problem and yields a smaller target function than $q(T|X,Y)$, which is incompatible with $q(T|X,Y)$ solving the IIB problem. In other words, we must have $q(T|X,Y) = q'(T|X,Y)$: i.e., for all $(x,y) \in \mathcal{S}$, we have $q(t|x,y) = q(t|\mathcal{T}_j^q) q(\mathcal{T}_j^q|x,y)$. We conclude with point $(i)$ in Lemma \ref{lemma:charac_q_equals_qprime}, and the fact that $\{ \mathcal{S}_j\}_j$ is a partition of $\mathcal{S}$.}
\end{proof}

Now let $q(T|X,Y)$ be a channel that solves the IIB problem \eqref{eq:def_intertwining_ib} with $\lambda = D(p(X,Y)||\hlcc{p(X)p(Y)})$. The conclusion of Lemma \ref{lemma:almost_charc_iib_Ixy} can be reformulated as the assertion that for all $(x,y) \in \mathcal{S}$, $t \in \mathcal{T}$,
\begin{align*}
    q(t|x,y) = (\gamma_q \circ \pi_{\mathcal{S}})(t|x,y),
\end{align*}
where we recall that $\pi_{\mathcal{S}}$ is the deterministic clustering defined by the partition $\{ \mathcal{S}_j \}_j$ of $\mathcal{S}$ (see \eqref{eq:def_clustering_equivalence_rel_support}), and $\gamma_q$ is defined in \eqref{eq:def:gamma_q}. Moreover, Lemma \ref{lemma:in_delta_iib_iff_sends_partition_to_partition} ensures that $\gamma_q$ is congruent. Therefore, we have proven that any solution to the IIB problem \eqref{eq:def_intertwining_ib} for $\lambda=D(p(X,Y)||\hlcc{p(X)p(Y)})$ must be of the form
\begin{align} \label{eq:local:form_solution_set_IIB}
    q(t|x,y) = \begin{cases}
    (\gamma \circ \pi_{\mathcal{S}})(t|x,y) \quad & \text{\emph{if $(x,y) \in \mathcal{S}$}} \\
    q_0(t|x,y) \quad & \text{\emph{if $(x,y) \in \mathcal{S}^c$}}
    \end{cases}
\end{align}
for some congruent channel $\gamma \in C_{\text{cong}}(\{1, \dots,n\}, \mathcal{T})$, and some arbitrary channel $q_0 \in C(\mathcal{S}^c, \mathcal{T})$ on the support's complement. I.e., \hlcc{denoting} $E$ the set of channels of the latter form \eqref{eq:local:form_solution_set_IIB}, we proved that the set of solutions to the IIB problem \eqref{eq:def_intertwining_ib} for $\lambda=D(p(X,Y)||\hlcc{p(X)p(Y)})$ is included in $E$.

\hlc{To end the proof of Theorem \ref{th:solutions_to_IIB}, let us prove that $E$ is included in the solutions to the IIB for $\lambda = I(X;Y)$.
\begin{lemma} \label{lemma:value_target_constraint_for_solutions}
  For $q(T|X,Y) \in E$, we have 
  \begin{align*}
    D(q(T)||\tilde{q}(T)) = I(X;Y)
  \end{align*} 
  and
  \begin{align*}
    I_{q}(X,Y;T) = H(\pi_\mathcal{S}(X,Y)).
  \end{align*}
\end{lemma}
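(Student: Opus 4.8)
The plan is to handle the two claimed equalities separately, in each case reducing to material already established in the excerpt; in particular we never need to inspect the behaviour of $q$ on $\mathcal{S}^c$. Recall that for $q \in E$ and $(x,y) \in \mathcal{S}$ we have $q(t|x,y) = (\gamma \circ \pi_{\mathcal{S}})(t|x,y) = \gamma(t|\pi_{\mathcal{S}}(x,y))$, where $\gamma \in C_{\text{cong}}(\{1,\dots,n\},\mathcal{T})$.

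\emph{The constraint value.} By Lemma \ref{lemma:pre-charac_constraint_set_ib}, $D(q(T)||\tilde{q}(T)) = I(X;Y)$ holds as soon as, for every $t \in \mathcal{T}$, the probabilistic pre-image $\mathcal{S}^q_t = \{(x,y) \in \mathcal{S} : q(t|x,y) > 0\}$ is contained in a single block $\mathcal{S}_j$ of the partition; this criterion refers only to the values of $q$ on $\mathcal{S}$, so $q_0$ is irrelevant to it. For $q \in E$, $q(t|x,y) > 0$ forces $t \in \supp(\gamma(\cdot|\pi_{\mathcal{S}}(x,y)))$. Since $\gamma$ is congruent, the supports $\supp(\gamma(\cdot|j))$, $j = 1,\dots,n$, are pairwise disjoint, so all elements of $\mathcal{S}^q_t$ share the same value of $\pi_{\mathcal{S}}$, i.e.\ $\mathcal{S}^q_t \subseteq \mathcal{S}_{\pi_{\mathcal{S}}(x,y)}$ (vacuously when $\mathcal{S}^q_t = \emptyset$). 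This is exactly the hypothesis of Lemma \ref{lemma:pre-charac_constraint_set_ib}, so $D(q(T)||\tilde{q}(T)) = I(X;Y)$.

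\emph{The target value.} Since $p(X,Y)$ is supported on $\mathcal{S}$, the quantity $I_q(X,Y;T)$ computed from $q(x,y,t) = p(x,y)q(t|x,y)$ does not see $q_0$ either: under $q$, $T$ is drawn from $\gamma(\cdot|\pi_{\mathcal{S}}(X,Y))$, so $T$ depends on $(X,Y)$ only through the deterministic variable $\Pi := \pi_{\mathcal{S}}(X,Y)$, i.e.\ $(X,Y)-\Pi-T$ is a Markov chain. As $\Pi$ is a function of $(X,Y)$, this gives $I_q(X,Y;T) = I_q(\Pi;T)$. Now $T$ is obtained from $\Pi$ through the congruent channel $\gamma$, so by the elementary property of congruent channels recorded after Definition \ref{def:congruent_channel} (applied with $A = B = \Pi$), $I_q(\Pi;T) = I_q(\Pi;\Pi) = H(\Pi)$; equivalently, picking a left inverse $f$ of $\gamma$ (so $f\circ\gamma = e_{\{1,\dots,n\}}$) one has $\Pi = f(T)$ $q$-almost surely, hence $H(\Pi|T) = 0$ and $I_q(\Pi;T) = H(\Pi)$. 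Since $\Pi = \pi_{\mathcal{S}}(X,Y)$ has law $j \mapsto p(\mathcal{S}_j)$, this is $H(\pi_{\mathcal{S}}(X,Y))$. (If a self-contained computation is preferred, expand $I_q(X,Y;T)$ directly, group the $(x,y)$-sum by the blocks $\mathcal{S}_j$, use that for each $t$ at most one $j$ has $\gamma(t|j) > 0$ so $q(t) = p(\mathcal{S}_{f(t)})\gamma(t|f(t))$, and observe that the $\gamma(t|j)$ factors cancel inside the logarithm, leaving $-\sum_j p(\mathcal{S}_j)\log p(\mathcal{S}_j)$.)

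\emph{Main obstacle.} There is no real obstacle: the lemma is bookkeeping on top of Lemmas \ref{lemma:pre-charac_constraint_set_ib}--\ref{lemma:in_delta_iib_iff_sends_partition_to_partition} and the basic facts about congruent channels. The only points needing care are that the arbitrary piece $q_0$ on $\mathcal{S}^c$ influences neither quantity --- for the target because $p(X,Y)$ lives on $\mathcal{S}$, for the constraint because the criterion of Lemma \ref{lemma:pre-charac_constraint_set_ib} refers only to $\mathcal{S}$ --- and that the law of $\pi_{\mathcal{S}}(X,Y)$ is correctly identified as $j \mapsto p(\mathcal{S}_j)$ so that $H(\pi_{\mathcal{S}}(X,Y))$ is unambiguous.
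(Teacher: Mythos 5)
Your proof is correct, and it takes a genuinely different route from the paper's. For the constraint value, you apply Lemma~\ref{lemma:pre-charac_constraint_set_ib} directly: the pairwise disjointness of the supports $\supp(\gamma(\cdot|j))$ forces each probabilistic pre-image $\mathcal{S}^q_t$ into a single block $\mathcal{S}_j$. The paper instead first shows that the congruent channel $\gamma$ in the definition of $E$ coincides with the canonical $\gamma_q(t|j) = \frac{q(t)}{q(\mathcal{T}_j^q)}\delta_{t\in\mathcal{T}_j^q}$ (which requires establishing $\supp(\gamma(\cdot|j))=\mathcal{T}_j^q$, $q(t)=\gamma(t|j)p(\mathcal{S}_j)$, and $q(\mathcal{T}_j^q)=p(\mathcal{S}_j)$) and only then invokes Lemma~\ref{lemma:in_delta_iib_iff_sends_partition_to_partition}(iii)$\Leftrightarrow$(i). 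Your argument bypasses the $\gamma=\gamma_q$ identification entirely, which is shorter. For the target value, you use a conceptual argument --- the Markov chain $(X,Y)-\Pi-T$ with $\Pi=\pi_{\mathcal{S}}(X,Y)$ a deterministic function of $(X,Y)$ gives $I_q(X,Y;T)=I_q(\Pi;T)$, and then the left inverse of the congruent channel $\gamma$ gives $H(\Pi|T)=0$, so $I_q(\Pi;T)=H(\Pi)$ --- whereas the paper performs the direct expansion of $I_q(X,Y;T)$ using $\gamma(t|j)=q(t)/q(\mathcal{T}_j^q)$ so the log factor becomes block-constant and factors out. Both are sound; yours is more modular and reuses the congruent-channel remark after Definition~\ref{def:congruent_channel}, while the paper's computation has the side benefit of re-deriving the identities $q(\mathcal{T}_j^q)=p(\mathcal{S}_j)$ that are reused in the surrounding development. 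Your care about $q_0$ being invisible to both quantities (because $p$ is supported on $\mathcal{S}$ and Lemma~\ref{lemma:pre-charac_constraint_set_ib}'s criterion refers only to $\mathcal{S}$) is exactly right.
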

}
\begin{proof}
\hlc{Let $q(T|X,Y) \in E$. \hlcc{Then $\gamma$ from the definition \eqref{eq:local:form_solution_set_IIB} of $q(T|X,Y)$ coincides with the channel $\gamma_q$ defined in \eqref{eq:def:gamma_q}. Indeed, let us fix $j$. First, we have
\begin{align}   \label{eq:local:supp_gamma_is_Tjq}
    \supp(\gamma(\cdot|j)) = \Tcal_j^q,
\end{align}
because for $t \in \Tcal$,
\begin{align*}
    \gamma(t|j) > 0 \quad &\Leftrightarrow \quad \exists (x,y) \in \Scal_j, \ \ \gamma(t|j) \delta_{(x,y) \in \Scal_{j}} > 0 \\
    &\Leftrightarrow \quad \exists (x,y) \in \Scal_j, \ \ \sum_{j'=1}^n \gamma(t|j') \delta_{(x,y) \in \Scal_{j'}} > 0 \\
    &\Leftrightarrow \quad \exists (x,y) \in \Scal_j, \ \ (\gamma \circ \pi_{\Scal}) (t|x,y) > 0 \\
    &\Leftrightarrow \quad \exists (x,y) \in \Scal_j, \ \ q(t|x,y) > 0 \\
    &\Leftrightarrow \quad t \in \Tcal_j^q,
\end{align*}
where the first lines uses $\Scal_j \neq \emptyset$, and the last one uses the definition \eqref{eq:local_def_Tj} of $\Tcal_j^q$. Thus, $t \notin \Tcal_j^q$ implies $\gamma(t|j) = 0$, and for $t \in \Tcal_j^q$,
\begin{align*}
    q(t) &= \sum_{(x,y) \in \Xcal \times \Ycal} q(t|x,y) p(x,y) \\
    &= \sum_{(x,y) \in \Scal_j} q(t|x,y) p(x,y) \\
    &= \sum_{(x,y) \in \Scal_j} \gamma(t|j) p(x,y) \\
    &= \gamma(t|j) p(\Scal_j),
\end{align*}
where the second line uses the definition \eqref{eq:local_def_Tj} of $\Tcal_j^q$, and the third line uses the definition \eqref{eq:local:form_solution_set_IIB} of $q(T|X,Y)$. As a consequence, we also have
\begin{align}   \label{qTj_is_pSj}
    q(\Tcal_j^q) &= \sum_{t \in \Tcal_j^q} q(t) = p(\Scal_j) \sum_{t \in \Tcal_j^q} \gamma(t|j) = p(\Scal_j),
\end{align}
where the last equality uses \eqref{eq:local:supp_gamma_is_Tjq}. Thus we do have, for all $t \in \Tcal$,
\begin{align}   \label{eq:loc:gamma_is_gammaq}
    \gamma(t|j) = \frac{q(t)}{q(\Tcal_j^q)} \delta_{t \in \Tcal_j^q},
\end{align}
i.e., $\gamma(t|j)=\gamma_q(t|j)$ (see equation \eqref{eq:def:gamma_q}).} As $\gamma$ is assumed congruent, Lemma \ref{lemma:in_delta_iib_iff_sends_partition_to_partition} then implies that $D(q(T)||\tilde{q}(T)) = I(X;Y)$.}

On the other hand, as in \eqref{eq:loc:rewriting_mutual_info}, we can write \hlcc{
\begin{align} 
  I_q(X,Y;T) &= \sum_{j=1}^n \sum_{(x,y) \in \mathcal{S}_j} p(x,y) \sum_{t \in \mathcal{T}_j^q} q(t|x,y) \log\left( \frac{q(t|x,y)}{q(t)} \right) \notag \\
  &= \sum_{j=1}^n \sum_{(x,y) \in \mathcal{S}_j} p(x,y) \sum_{t \in \mathcal{T}_j^q} q(t|x,y) \log\left( \frac{\gamma(t|j)}{q(t)} \right) \label{eq:loc:I_xy_on_supports_1} \\
  &= \sum_{j=1}^n \sum_{(x,y) \in \mathcal{S}_j} p(x,y) \sum_{t \in \mathcal{T}_j^q} q(t|x,y) \log\left( \frac{q(t)}{q(t)q(\Tcal_j^q)} \right) \label{eq:loc:I_xy_on_supports_2} \\
  &= \sum_{j=1}^n \log\left( \frac{1}{q(\Tcal_j^q)} \right)  \sum_{(x,y) \in \mathcal{S}_j} p(x,y) \sum_{t \in \mathcal{T}_j^q} q(t|x,y) \notag \\
  &= \sum_{j=1}^n \log\left( \frac{1}{q(\Tcal_j^q)} \right)  p(\Scal_j) \label{eq:loc:I_xy_on_supports_3} \\
  &= \sum_{j=1}^n p(\Scal_j) \log\left( \frac{1}{p(\Scal_j)} \right) \label{eq:loc:I_xy_on_supports_4} \\
  &= H(\pi_\mathcal{S}(X,Y)), \notag
\end{align}
where line \eqref{eq:loc:I_xy_on_supports_1} uses the definition \eqref{eq:local:form_solution_set_IIB} of $q(T|X,Y)$; line \eqref{eq:loc:I_xy_on_supports_2} uses equation \eqref{eq:loc:gamma_is_gammaq}; line \eqref{eq:loc:I_xy_on_supports_3} uses the definition \eqref{eq:local_def_Tj} of $\Tcal_j^q$; and line \eqref{eq:loc:I_xy_on_supports_4} uses equation \eqref{qTj_is_pSj}.} \del
\end{proof}
\hlc{Now, because the IIB problem is defined as the minimisation of a continuous function on a compact domain, it has at least one solution, say $q_\ast(T|X,Y)$, which we know belongs to $E$ \hlcc{(we already proved that any solution to the IIB with $\lambda=D(p(X,Y)||p(X)p(Y))$ belongs to $E$)}. But Lemma \ref{lemma:value_target_constraint_for_solutions} then implies that for all $q(T|X,Y) \in E$, we have $D(q(T)||\tilde{q}(T)) = D(q_\ast(T)||\tilde{q_\ast}(T))$ and $I_{q}(X,Y;T) = I_{q_\ast}(X,Y;T)$. Thus any $q(T|X,Y) \in E$ must also be a solution.}

\end{proof}

\subsection{Characterisation of equivariances with the equivalence relation}
\label{apd:charac_equivariance_equivalence_relation}

In this part, we characterise the equivariance group of $p(Y|X)$ with the equivalence relation $\sim$ (see equation \eqref{eq:def_equivalence_relation}), thanks to the specific assumption that $p(Y)$ is uniform (see Theorem \ref{th:char_equivariance_with_iib_Ixy}).

\begin{lemma}   \label{lemma:equivariance_symbol_wise}
  A pair $(\sigma, \tau) \in \bij(\mathcal{X}) \times \bij(\mathcal{Y})$ is an equivariance of $p(Y|X)$ if and only if for all $(x,y) \in \mathcal{X} \times \mathcal{Y}$,
  \begin{align*}
    p(y|x) = p(\tau \cdot y | \sigma \cdot x).
  \end{align*}
\end{lemma}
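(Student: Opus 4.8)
The plan is simply to rewrite the channel equation $p(Y|X) \circ \sigma = \tau \circ p(Y|X)$ from Definition~\ref{def:exact_channel_equivariance} entrywise, using the convention that a bijection is regarded as a deterministic channel. First I would compute the left-hand side: since $\sigma$ acts first and deterministically sends $x$ to $\sigma \cdot x$, one gets $(p(Y|X) \circ \sigma)(y|x) = p(y \mid \sigma \cdot x)$ for every $(x,y) \in \mathcal{X} \times \mathcal{Y}$. Next I would compute the right-hand side: composing with $\tau$ at the output gives $(\tau \circ p(Y|X))(y|x) = \sum_{y'} \delta_{y = \tau \cdot y'}\, p(y'|x) = p(\tau^{-1} \cdot y \mid x)$, where $\tau^{-1}$ makes sense because $\tau \in \bij(\mathcal{Y})$. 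Hence the equivariance condition is equivalent to the scalar identity $p(y \mid \sigma \cdot x) = p(\tau^{-1} \cdot y \mid x)$ holding for all $(x,y)$.

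The final step is a change of variable: because $\tau$ is a bijection of $\mathcal{Y}$, quantifying over all $y$ is the same as quantifying over all $\tau \cdot y$, so the identity above is equivalent to $p(\tau \cdot y \mid \sigma \cdot x) = p(y \mid x)$ for all $(x,y) \in \mathcal{X} \times \mathcal{Y}$, which is precisely the claimed symbol-wise characterisation. Both implications follow at once, since every step in the chain is an equivalence.

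The only point that needs care --- and the closest thing to an ``obstacle'' here --- is getting the composition order and the placement of $\tau$ versus $\tau^{-1}$ right: in $p(Y|X) \circ \sigma$ the permutation $\sigma$ is applied to the input before the channel, whereas in $\tau \circ p(Y|X)$ the permutation $\tau$ post-processes the output, which is why an inverse transiently appears on the $\mathcal{Y}$ side before the reindexing absorbs it. No information-theoretic input is required; this lemma is a purely combinatorial reformulation of Definition~\ref{def:exact_channel_equivariance}, and it is exactly the form that will later let us relate equivariances to the equivalence relation $\sim$ of \eqref{eq:def_equivalence_relation}.
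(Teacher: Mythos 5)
Your proof is correct and is essentially the same argument as the paper's, merely written out entrywise rather than in transition-matrix notation: the paper rewrites $p(Y|X)\circ\sigma = \tau\circ p(Y|X)$ as $P_{Y|X} = P_\tau P_{Y|X} P_{\sigma^{-1}}$ and then reads off the row/column permutations, which is exactly the substitution $y\mapsto\tau\cdot y$ you carry out to absorb the transient $\tau^{-1}$. Both versions are a direct combinatorial unwinding of Definition~\ref{def:exact_channel_equivariance}, and your explicit handling of the $\tau$ versus $\tau^{-1}$ placement spells out what the paper compresses into the phrase ``permutation $(\sigma^{-1})^{-1}=\sigma$ of the columns.''
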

\begin{proof}
  We have, writing $P_{Y|X}$ the column transition matrix corresponding to the channel $p(Y|X)$ and $G_{p(Y|X)}$ the equivariance group of $p(Y|X)$,
  \begin{align*}
    (\sigma, \tau) \in G_{p(Y|X)} \quad &\Leftrightarrow  \quad P_{Y|X} P_{\sigma} = P_{\tau} P_{Y|X} \\
    &\Leftrightarrow  \quad P_{Y|X} = P_{\tau} P_{Y|X} P_{\sigma^{-1}} \\
    &\Leftrightarrow  \quad P_{Y|X} = P_{\tau \cdot Y|\sigma \cdot X},
  \end{align*}
 where the last equivalence comes from the fact that the \emph{left} multiplication of $P_{Y|X}$ by the permutation matrix $P_\tau$ induces the permutation $\tau$ of the rows of $P_{Y|X}$; whereas the \emph{right} multiplication of $P_{Y|X}$ by the permutation matrix $P_{\sigma^{-1}}$ induces the permutation $(\sigma^{-1})^{-1} = \sigma$ of the columns of $P_{Y|X}$.
\end{proof}

\hlc{Now, as allowed by Theorem \ref{th:char_equivariance_with_iib_Ixy}'s assumption, we choose $p(X)$ such that $p(Y)$ is uniform. \hlcc{This implies, crucially, that $p(Y) = p(\tau \cdot Y)$, so that}
\begin{align*}
  p(y|x) = p(\tau \cdot y | \sigma \cdot x) \quad \Leftrightarrow \quad \frac{p(x,y)}{p(x)p(y)} = \frac{p(\sigma \cdot x, \tau \cdot y)}{p(\sigma \cdot x)p(\tau \cdot y)},
\end{align*}
i.e., recalling the definition of $\sim$ (see equation \eqref{eq:def_equivalence_relation}),
\begin{align*}
  p(y|x) = p(\tau \cdot y | \sigma \cdot x) \quad \Leftrightarrow \quad (x,y) \sim (\sigma \cdot x, \tau \cdot y).
\end{align*}
Taking Lemma \ref{lemma:equivariance_symbol_wise} into account, this yields:
\begin{proposition} \label{prop:equivariance_iff_equivalence}
  For a choice of $p(X)$ such that its image $p(Y)$ through the channel $p(Y|X)$ is uniform,
  \begin{align} \label{eq:equivariance_iff_equivalence}
    (\sigma, \tau) \in G_{p(Y|X)} \quad \Leftrightarrow \quad \forall (x,y) \in \mathcal{X} \times \mathcal{Y}, \ (x, y) \sim (\sigma \cdot x, \tau \cdot y).
  \end{align}
\end{proposition}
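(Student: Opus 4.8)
The plan is to reduce everything to Lemma~\ref{lemma:equivariance_symbol_wise}, which already gives the symbol-wise characterisation: $(\sigma,\tau)\in G_{p(Y|X)}$ iff $p(y|x) = p(\tau\cdot y\mid\sigma\cdot x)$ for all $(x,y)\in\mathcal{X}\times\mathcal{Y}$. So it suffices to show that, under the hypothesis that $p(Y)$ is uniform, this pointwise equality is equivalent to $(x,y)\sim(\sigma\cdot x,\tau\cdot y)$ holding for all $(x,y)$, with $\sim$ as in \eqref{eq:def_equivalence_relation}.

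First I would observe that since $\tau$ is a bijection of $\mathcal{Y}$ and $p(Y)$ is uniform, $p(y) = p(\tau\cdot y)$ for every $y$; moreover all the marginals in sight are strictly positive, because the standing assumption that $p(X,Y)$ is fully supported forces $p(X)$ and $p(Y)$ to be fully supported, so every ratio below is well-defined. Then, using $p(y|x) = p(x,y)/p(x)$,
\[
\frac{p(x,y)}{p(x)p(y)} \;=\; \frac{p(y|x)}{p(y)},
\qquad
\frac{p(\sigma\cdot x,\tau\cdot y)}{p(\sigma\cdot x)\,p(\tau\cdot y)} \;=\; \frac{p(\tau\cdot y\mid\sigma\cdot x)}{p(\tau\cdot y)},
\]
and since $p(y) = p(\tau\cdot y)$, these two ratios are equal precisely when $p(y|x) = p(\tau\cdot y\mid\sigma\cdot x)$. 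By the definition of $\sim$, equality of the two ratios is exactly $(x,y)\sim(\sigma\cdot x,\tau\cdot y)$. Quantifying over all $(x,y)$ and combining with Lemma~\ref{lemma:equivariance_symbol_wise} then yields \eqref{eq:equivariance_iff_equivalence}.

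This is essentially a one-line manipulation once Lemma~\ref{lemma:equivariance_symbol_wise} is available, so I do not expect a genuine obstacle. The only things to be careful about are (i) that every step above is a real ``if and only if'' — using positivity of the marginals and the uniformity of $p(Y)$ — so that both directions of \eqref{eq:equivariance_iff_equivalence} are obtained simultaneously; and (ii) flagging that this is exactly the place where the uniformity of $p(Y)$ is used: without it the identity $p(y) = p(\tau\cdot y)$ can fail and the correspondence between equivariances and $\sim$-preserving pairs breaks down — a point that will matter for the remaining steps of the proof of Theorem~\ref{th:char_equivariance_with_iib_Ixy}.
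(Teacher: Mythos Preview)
Your proposal is correct and essentially identical to the paper's own argument: both invoke Lemma~\ref{lemma:equivariance_symbol_wise}, use the uniformity of $p(Y)$ to get $p(y)=p(\tau\cdot y)$, and then observe that the pointwise conditional equality is equivalent to the ratio equality defining $\sim$. The only cosmetic difference is that the paper works in the slightly more general setting where only the marginals $p(X)$ and $p(Y)$ are assumed fully supported (not the joint), but your reasoning goes through verbatim there as well.
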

}

\subsection{Conclusion of the proof}
\label{apd:end_proof}

Here, we assume anew that $p(X,Y)$ is fully supported, i.e., that $\mathcal{S} = \mathcal{X} \times \mathcal{Y}$. 

Recalling that $(\sigma \otimes \tau) (x,y):= (\sigma \cdot x,\tau \cdot y)$ and that by definition of the deterministic clustering $\pi$ (see the beginning of Appendix \ref{apd:proof_charac_equivariance_with_iib_solutions}), we have $(x,y) \sim (x',y')$ if and only if $\pi(x,y) = \pi(x',y')$, we get that the right-hand-side in \eqref{eq:equivariance_iff_equivalence} is equivalent to
\begin{align}   \label{eq:local_proof_cor_1}
  \pi \circ (\sigma \otimes \tau) = \pi.
\end{align}
Now, from Corollary \ref{cor:solutions_to_IIB_fullsupport} and the fact that $p(X,Y)$ is fully supported, the solutions to the IIB for $\lambda = I(X;Y)$ are the channels of the form $\gamma \circ \pi$, for any congruent channel $\gamma \in C_{\text{cong}}(\{1, \dots, n\}, \mathcal{T})$. Thus, if we prove that, for any congruent channel $\gamma$, equation \eqref{eq:local_proof_cor_1} is equivalent to 
\begin{align}   \label{eq:local_proof_cor_2}
  \gamma \circ \pi \circ (\sigma \otimes \tau) = \gamma \circ \pi,
\end{align}
this would prove that for any solution $\kappa$ to the IIB for $\lambda = I(X;Y)$, we have $(\sigma, \tau) \in G_{p(Y|X)}$ if and only if \hl{$\kappa \circ (\sigma \otimes \tau) = \kappa$}: this is exactly the statement of Theorem \ref{th:char_equivariance_with_iib_Ixy}. Therefore, we only need to prove the following lemma:

\begin{lemma}
    Let $\mathcal{A}$, $\mathcal{B}$ and $\mathcal{C}$ be \hll{finite sets}. Consider two functions $f, g : \mathcal{A} \rightarrow \mathcal{B}$, and a congruent channel $\gamma \in C_{\text{cong}}(\mathcal{B}, \mathcal{C})$.
    Then $f = g$ if and only if $\gamma \circ f = \gamma \circ g$. 
\end{lemma}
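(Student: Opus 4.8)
The plan is to treat the forward implication as immediate and reduce the reverse implication to the left-invertibility characterisation of congruent channels already recorded after Definition~\ref{def:congruent_channel}.

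The direction $f = g \Rightarrow \gamma \circ f = \gamma \circ g$ needs no argument: channel composition is well-defined, so equal inputs produce equal outputs.

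For the converse, I would recall that since $\gamma \in C_{\text{cong}}(\mathcal{B}, \mathcal{C})$, there is a (deterministic) map $\phi : \mathcal{C} \to \mathcal{B}$ with $\phi \circ \gamma = e_{\mathcal{B}}$. Then, starting from the hypothesis $\gamma \circ f = \gamma \circ g$ and composing on the left with $\phi$, one gets $\phi \circ \gamma \circ f = \phi \circ \gamma \circ g$, i.e.\ $e_{\mathcal{B}} \circ f = e_{\mathcal{B}} \circ g$, i.e.\ $f = g$. Equivalently, and perhaps more transparently, one can unfold the composition: $(\gamma \circ f)(c|a) = \gamma(c|f(a))$ and likewise for $g$, so the hypothesis says $\gamma(\cdot|f(a)) = \gamma(\cdot|g(a))$ for every $a \in \mathcal{A}$; congruence forces the distributions $\{\gamma(\cdot|b)\}_{b\in\mathcal{B}}$ to have pairwise disjoint nonempty supports, hence to be pairwise distinct as elements of $\Delta_{\mathcal{C}}$, whence $f(a) = g(a)$.

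There is essentially no obstacle: the only point requiring care is that ``congruent'' is invoked precisely in the form asserting that $\gamma$ is left-cancellable (injective on point masses), which is exactly the equivalence cited in the paragraph following Definition~\ref{def:congruent_channel}. I would state the proof using whichever of the two phrasings above is most consistent with the conventions used elsewhere in the appendix.
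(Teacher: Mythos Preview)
Your proposal is correct and takes essentially the same approach as the paper: both use the left-invertibility of a congruent channel to cancel $\gamma$ on the left. The only cosmetic difference is that the paper explicitly constructs the left inverse $h$ from the disjoint supports, whereas you invoke the characterisation already recorded after Definition~\ref{def:congruent_channel}; your alternative ``disjoint-supports'' phrasing is likewise valid.
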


\begin{proof}
  Clearly, $ f = g$ implies $\gamma \circ f = \gamma \circ g$. Conversely, assume that $\gamma \circ f = \gamma \circ g$. As $\gamma$ is congruent, the supports of the $\gamma(C|b)$, where $b \in \mathcal{B}$, are disjoint sets $\mathcal{C}_b \subseteq \mathcal{C}$. Let us consider the deterministic clustering $h \in C(\bigsqcup_{b \in \mathcal{B}} \mathcal{C}_b, \mathcal{B})$ defined by $h(b|c) := \delta_{c \in \mathcal{C}_b}$. Then $h \circ \gamma$ is the identity \hlcc{on} $\mathcal{B}$. But $\gamma \circ f = \gamma \circ g$ implies that
  \begin{align} 
    h \circ \gamma \circ f = h \circ \gamma \circ g,
  \end{align}
  which thus means exactly $f = g$.
\end{proof}

\hl{
\begin{remark}      \label{rmk:proof_for_general_pYgX}
    The only part of the proof where we used the full support assumption on $p(X,Y)$ was Appendix \ref{apd:end_proof}, which is thus the only part which would need\hlll{, in future work,} to be adapted to non-necessarily full-support distributions $p(X,Y)$.
\end{remark}
}

\section{\hll{Towards generalisations to non-finite variables}}
\label{apd:generalisation_to_nonfinite_case}

This work is set in the finite case, but it provides a basis for generalisations to more general settings. Indeed, the notions and tools used in this paper have straightforward generalisations to, for instance, the measure-theoretic setting --- which include finite, countable and continuous spaces. In particular, one can directly generalise\hlcc{, to Borel spaces \citep{rudinRealComplexAnalysis1987},} probabilities and conditional probabilities \citep{billingsleyProbabilityMeasure1995}, as well as the Kullback-Leibler divergence and mutual information \citep{grayEntropyInformationTheory2014}. Thus it seems that the IIB problem \eqref{eq:def_intertwining_ib} can be defined for Borel spaces. Moreover, the tools used in Appendix \ref{apd:explicit_solution_iib_Ixy} to describe explicitly the case $\lambda = I(X;Y)$ seem to adapt well to Borel spaces: namely, the log-sum inequality and its equality case; partitions induced by an equivalence relation; and the switching of the integration order for probability measures \citep{billingsleyProbabilityMeasure1995}.

Eventually, one can consider the action of measurable groups on Borel spaces \citep{kallenbergRandomMeasuresTheory2017}, along with the corresponding partition defined by the group action's orbits. One could thus consider measurable equivariances of conditional probabilities between Borel spaces. These concepts would allow the statement of Theorem \ref{th:char_equivariance_with_iib_Ixy} to be given a meaning in this general setting. \hlll{We leave to future work to fully adapt} the proof of Theorem \ref{th:char_equivariance_with_iib_Ixy} to \hlll{such a generalised statement.}

\end{document}